\documentclass[letterpaper,journal]{IEEEtran}
\IEEEoverridecommandlockouts
\usepackage{amsmath,amssymb,amsfonts}
\usepackage{amsthm}
\usepackage{algorithm}
\usepackage{algpseudocode}
\newtheorem{myDef}{Definition}
\newtheorem{myTheo}{Theorem}
\newtheorem{myLemm}{Lemma}

\usepackage{cite}
\usepackage{makecell}
\usepackage{booktabs}
\usepackage{multirow}
\usepackage{graphicx}
\usepackage{float} 
\usepackage{subfigure}
\usepackage{pifont}
\usepackage{textcomp}
\usepackage{color}
\newtheorem*{theorem*}{Theorem}
\usepackage{bm}
\usepackage{threeparttable}
\hyphenation{op-tical net-works semi-conduc-tor IEEE-Xplore}

\begin{document}

\title{\huge A QoE-Driven Personalized Incentive Mechanism Design for AIGC Services in Resource-Constrained Edge Networks}

\author{Hongjia~Wu, Minrui~Xu,
	Zehui~Xiong, Lin Gao,~\IEEEmembership{Senior Member,~IEEE}, Haoyuan Pan,~\IEEEmembership{Member,~IEEE},  Dusit~Niyato,~\IEEEmembership{Fellow,~IEEE}, and Tse-Tin Chan,~\IEEEmembership{Member,~IEEE}
\thanks{\emph{(Corresponding author: Tse-Tin Chan.)}}
\thanks{H.~Wu and T.-T.~Chan are with the Department of Mathematics and Information Technology, The Education University of Hong Kong, Hong Kong SAR, China (e-mails: {whongjia@eduhk.hk}, {tsetinchan@eduhk.hk}).}
\thanks{Z. Xiong is with the School of Electronics, Electrical Engineering and Computer Science (EEECS), Queen's University Belfast, Belfast, BT7 1NN, U.K. (e-mail: z.xiong@qub.ac.uk).}
\thanks{L. Gao is with the School of Electronics and Information Engineering and the Guangdong Provincial Key Laboratory of Aerospace Communication and Networking Technology, Harbin Institute of Technology, Shenzhen, China (e-mail: gaol@hit.edu.cn).}
\thanks{H. Pan is with the College of Computer Science and Software Engineering, Shenzhen University, Shenzhen, China (e-mail: hypan@szu.edu.cn).}
\thanks{M. Xu and D. Niyato are with the College of Computing and Data Science, Nanyang Technological University, Singapore (e-mails: {minrui001@e.ntu.edu.sg}, {dniyato@ntu.edu.sg}).}
}

\maketitle

\begin{abstract}
With rapid advancements in large language models (LLMs), AI-generated content (AIGC) has emerged as a key driver of technological innovation and economic transformation. Personalizing AIGC services to meet individual user demands is essential but challenging for AIGC service providers (ASPs) due to the subjective and complex demands of mobile users (MUs), as well as the computational and communication resource constraints faced by ASPs. To tackle these challenges, we first develop a novel multi-dimensional quality-of-experience (QoE) metric. This metric comprehensively evaluates AIGC services by integrating accuracy, token count, and timeliness. We focus on a mobile edge computing (MEC)-enabled AIGC network, consisting of multiple ASPs deploying differentiated AIGC models on edge servers and multiple MUs with heterogeneous QoE requirements requesting AIGC services from ASPs. To incentivize ASPs to provide personalized AIGC services under MEC resource constraints, we propose a QoE-driven incentive mechanism. We formulate the problem as an equilibrium problem with equilibrium constraints (EPEC), where MUs as leaders determine rewards, while ASPs as followers optimize resource allocation. To solve this, we develop a dual-perturbation reward optimization algorithm, reducing the implementation complexity of adaptive pricing. Experimental results demonstrate that our proposed mechanism achieves a reduction of approximately $64.9\%$ in average computational and communication overhead, while the average service cost for MUs and the resource consumption of ASPs decrease by $66.5\%$ and $76.8\%$, respectively, compared to state-of-the-art benchmarks.

\end{abstract}

\begin{IEEEkeywords}
Generative AI (GAI), incentive mechanism, mobile edge computing (MEC), multi-dimensional quality of experience (QoE), resource allocation.
\end{IEEEkeywords}

\section{Introduction}\label{I}
\subsection{Background and Motivation}
Advancements in machine learning algorithms and large language models (LLMs) have significantly enhanced the quality and diversity of AI-generated content (AIGC), driving innovation across various domains and transforming the economic landscape~\cite{xu2023unleashing}. AIGC services, such as OpenAI's ChatGPT, have demonstrated remarkable capabilities in tasks like language understanding, text generation, summarization, and conversational AI.
With the widespread adoption of AIGC services, users across diverse applications have increasingly varied expectations for service performance, making personalized service a critical requirement. However, providing personalized AIGC services is hindered by the subjective and complex demands of mobile users (MUs), as well as the limited computational and communication resources of AIGC service providers (ASPs). Addressing these barriers in mobile AIGC networks requires overcoming both technical and economic challenges.

\textbf{Challenge 1: Resource Constraints}.
Requesting real-time AIGC services directly from the cloud often incurs significant latency and network congestion. Furthermore, the inference of LLMs necessary for AIGC services is resource-intensive and time-consuming, making it impractical for deployment on mobile devices due to high costs. To this end, current schemes~\cite{10409284, xu2023sparks} deploy pre-trained AIGC models on edge servers, leveraging the infrastructure of wireless edge networks. When MUs request AIGC services, they can send the prompts to edge servers. The prompts are then executed
using generative AI models, and the results are sent back to MUs. This approach alleviates the computational burden on mobile devices and enables flexible and scalable AIGC service provisioning in mobile edge networks.
 However, when numerous MUs simultaneously request services, edge servers with limited computational and communication resources can become overwhelmed, resulting in increased latency and degraded performance.

   \textbf{Challenge 2: Personalized Demands}. 
MUs exhibit diverse service demands depending on their applications. For example, MUs in vehicular networks prioritize concise and rapid decision-making responses~\cite{vehicle_concise}, whereas those involved in academic essay writing require detailed and profound content. To meet MUs' expectations, aligning AIGC services with their personalized demands is essential, considering the unique characteristics of different application areas.
However, achieving this goal faces two main problems. First, the stochastic nature of LLMs, which relies on next-token prediction~\cite{xu2024large}, may result in incoherent outputs, factual inaccuracies, or contradictions. This uncertainty makes it difficult for LLMs to consistently generate highly personalized content that accurately matches MUs' expectations. Second, existing service assessment metrics, such as designer-perceivable quality of experience (QoE)~\cite{10330096}, joint perpetual similarity and quality (JPSQ)~\cite{optimal_B2}, and user-perceived quality~\cite{du2023enabling}, fail to adequately consider the complex trade-offs between accuracy, token count, and timeliness in mobile AIGC networks.

 \textbf{Challenge 3: Self-Interested Behaviors of ASPs and MUs}.
ASPs are driven by profit and tend to reduce resource investment without sufficient incentives. Meanwhile, MUs prefer receiving high-quality services at minimal cost. This misalignment of goals can lead to a structural imbalance between service provision and economic incentives.
In response to this contradiction, existing research has proposed incentive strategies using mechanisms such as multi-agent deep reinforcement learning~\cite{optimal_B1}, game-theoretic~\cite{unified}, and contract theory~\cite{10158526,10529221}. However, these studies fail to explicitly model the multi-dimensional demands of individual users, making it challenging for ASPs to optimize resource allocation for diverse user needs. Furthermore, they often focus on a single type of resource and overlook the complexity involved in designing incentives for multi-ASP scenarios and the inherent competition among MUs.

To address these challenges, we propose a dual analytical framework from the perspectives of both MUs and ASPs, leading to the following two key research questions:

i) \textit{How can MUs optimize service costs while obtaining personalized AIGC services that meet their heterogeneous requirements in a resource-competitive network?}
   
ii) \textit{How can ASPs optimally allocate computational and communication resources to minimize resource consumption while delivering personalized AIGC services?}

\begin{figure}[t]
	\centering
	\includegraphics[width=\linewidth]{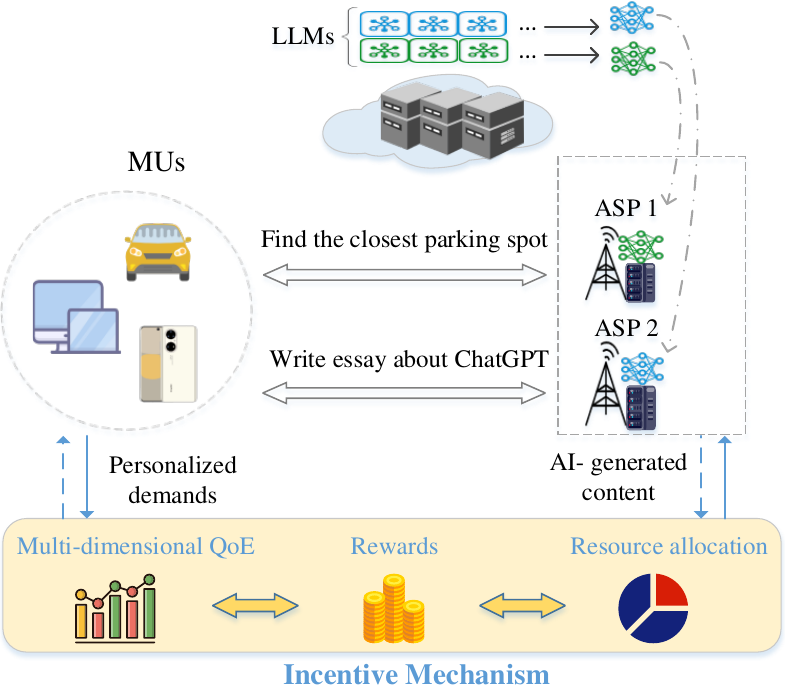}
	\caption{Illustration of the proposed QoE-driven incentive mechanism in an MEC-enabled AIGC network. The network comprises multiple ASPs deploying differentiated AIGC models on edge servers, while multiple MUs with heterogeneous QoE requirements request services from these ASPs.}
	\label{ex}
\end{figure}

\subsection{Solution and Contributions}
As illustrated in Fig.~\ref{ex}, we consider a mobile edge computing (MEC)-enabled AIGC
network, consisting of multiple ASPs with differentiated AIGC
models deployed on edge servers and multiple MUs with
heterogeneous QoE requirements requesting AIGC services
from ASPs. To address the above problems, we first design a multi-dimensional quality of experience (QoE) metric to quantify MUs' personalized demands. 
Building on this metric, we propose a QoE-driven incentive mechanism that aligns the interests of both MUs and ASPs.
 In our framework, MUs act as leaders by driving the trading process through differentiated reward strategies, while ASPs function as followers by dynamically adjusting resource allocation in response. This mechanism achieves equilibrium through bidirectional optimization: Competing MUs strategically determine rewards in a non-cooperative game, each aiming to obtain better personalized AIGC services at lower costs (problem i)), while ASPs jointly optimize the allocation of computational and communication resources to balance QoE fulfillment and resource consumption based on these rewards (problem ii)).
In summary, the key contributions of this work are summarized as follows. 

\begin{itemize}
\item \textit{Incentive mechanism design for on-demand service provisioning in resource-constrained AIGC edge networks.} This mechanism integrates the perspectives of MUs and ASPs, optimizing service costs for MUs to access personalized AIGC services while ensuring that ASPs can efficiently deliver on-demand services without over-provisioning resources. To our knowledge, this is the first analytical study on incentive mechanism design explicitly tailored to the unique characteristics of LLM-based text services in multi-MU and multi-ASP scenarios.

   \item \textit{Novel design of multi-dimensional QoE metric to quantify service quality.} To quantify MUs' personalized demands, we design a metric integrating three critical dimensions: accuracy, token count, and timeliness. The accuracy is evaluated through CoT reasoning performance gaps, capturing discrepancies between the expected and actual model outputs. Token count is determined by the total number of input and output tokens, while timeliness is constrained by the maximum tolerable latency. Through comprehensive analysis and validation of CoT reasoning steps and examples, the proposed QoE metric effectively models diverse service requirements while accounting for the limited resources of ASPs, supporting on-demand service provision.

  \item \textit{Theoretical analysis and algorithm designs.} We propose a two-level game-theoretic model to analyze the MU's optimal reward and the ASP's optimal resource allocation. Given multiple ASPs with differentiated AIGC models deployed on edge servers and multiple MUs with heterogeneous QoE requirements, the optimization problems of the ASP and the MU are of high complexity (e.g., inter-user competition and resource constraints).  We solve these problems by reformulating them as an equilibrium problem with equilibrium constraints (EPEC) framework, theoretically proving the existence and uniqueness of the equilibrium, and designing a dual-perturbation
reward optimization algorithm to reduce the implementation
complexity of adaptive pricing.

     \item \textit{Performance evaluation}. Numerical results based on real-world scenarios show that our proposed mechanism achieves approximately $64.9\%$ reductions in average computational and communication overhead, $66.5\%$ decreases in average MU service cost, and $76.8\%$ savings in resource consumption of ASPs on average, compared to state-of-the-art benchmarks.

\end{itemize}
The rest of this paper is organized as follows.
Section~\ref{II} discusses related work. In Section~\ref{III}, we present
the system overview and design a multi-dimensional QoE metric. Section~\ref{IV} gives the game formulation, and Section~\ref{V} analyzes the existence of the equilibria at two levels. Section~\ref{exp} shows numerical experiments to demonstrate the advantages of the proposed incentive mechanism, and finally, Section~\ref{VII} concludes the paper.

\section{Related Work}\label{II}
In this section, we discuss fundamental research in three areas: AIGC services in networks, multimodal prompt engineering, and incentive mechanisms for AIGC services. First, we explore issues and methods for implementing AIGC in resource-constrained edge environments, laying the foundation for our study. Building on this, we investigate multi-modal prompt engineering techniques essential for quantitatively assessing content accuracy in the QoE design. Finally, we highlight our unique contributions to incentive mechanism designs, addressing critical gaps in existing studies.

\subsection{AIGC Services in Networks}

To alleviate the computational burden on the cloud and reduce high latency caused by long transmission distances, Du~\textit{et al.}~\cite{10409284,du2023enabling} proposed an AIGC-as-a-Service (AaaS) architecture and introduced a dynamic ASP selection scheme for optimal user-provider connections.

To ensure secure and trustworthy AIGC services, Lin \textit{et al.}~\cite{10483549} designed a decentralized trust mechanism using smart contract-based verification to prevent unreliable outcomes in the Metaverse. Similarly, Liu \textit{et al.}~\cite{10504615} proposed a systematic approach incorporating MASP selection, payment schemes, and fee-ownership transfer, presenting a blockchain framework to protect mobile AIGC services.
For collaborative inference, several studies~\cite{10172151,10630955,10437617} explored distributed task allocation frameworks. These works achieved efficient AIGC services by optimizing workload distribution across multiple mobile devices, edge servers, and distant data centers.
Moreover, the authors in \cite{10746594}, \cite{optimal_B2}, and \cite{10273254} explored the integration of semantic communication and AIGC services, focusing on optimizing bandwidth and resource utilization while enhancing content quality in wireless networks.
In addition to the above works, several studies focus on performance improvement in edge networks. For instance, Wang \textit{et al.}~\cite{10515205} proposed the WP-AIGC framework, integrating wireless perception and AIGC to optimize computing resources at the edge server. Wang \textit{et al.}~\cite{optimal-f1} proposed a joint optimization scheme for offloading, computation time, and diffusion steps in the reverse diffusion stage, using average error as a metric to evaluate the quality of generated results.

These works, from diverse perspectives, have significantly advanced the deployment and optimization of AIGC services in networks, providing valuable insights for future research. Notably, distinct from existing studies, our work focuses on designing an incentive mechanism that motivates ASPs to provide services on-demand, meeting MUs' personalized demands in multi-user and multi-ASP scenarios.

\subsection{Multimodal Prompt Engineering} 
Prompt engineering in large language models (LLMs) involves designing and optimizing prompts or instructions to guide the model's output generation~\cite{xu2024generative}. This technique enhances LLM performance across diverse tasks, including language modeling, question answering, and image captioning. Wei~\textit{et al.}~\cite{wei2022chain} first demonstrated how generating a Chain-of-Thought (CoT) improves LLMs' ability to perform complex reasoning tasks. Their approach, which uses a few CoT demonstrations as exemplars in prompts, significantly improved performance on arithmetic, commonsense, and symbolic reasoning tasks. Building upon this, Wang~\textit{et al.}~\cite{wang2023plan} proposed the Plan-and-Solve (PS) prompting strategy for multi-step reasoning tasks. By dividing a task into smaller subtasks and executing them sequentially, PS prompting addresses missing-step errors and enhances the quality of generated reasoning steps.
Incorporating multimodal reasoning further enriches LLM capabilities, enabling the synthesis of consistent and coherent CoTs across different modalities. Lu~\textit{et al.}~\cite{lu2022learn} introduced ScienceQA, a multimodal reasoning benchmark comprising approximately 21k multiple-choice questions on diverse science topics. 
Similarly, Zhang~\textit{et al.} in \cite{zhang2023multimodal} proposed a new approach called Multimodal-CoT that incorporates both language (text) and vision (images) modalities into a two-stage framework for CoT prompting. 

While these studies have advanced CoT prompting and multimodal reasoning, they have not studied how to improve the quality of LLM-based services by capturing the user's demands from the theoretical properties of CoT and reducing resource consumption from the service and optimization perspectives.

\begin{table}[t]
\caption{Incentive Mechanisms for AIGC Services.}
\label{realted}
\begin{tabular}{c|c|c|c|c}
\hline
Ref.                                                & \begin{tabular}[c]{@{}c@{}}Multi-MU\\ Multi-ASP\end{tabular} & \begin{tabular}[c]{@{}c@{}}Optimization\\objective\end{tabular}                                                                  & New metric                                                                       & \begin{tabular}[c]{@{}c@{}}On \\ demand\end{tabular} \\ \hline
{\cite{optimal_B1}}                                             & \checkmark                                 & \begin{tabular}[c]{@{}c@{}}Service \\allocation\end{tabular}                                                                   & $\times$                                                                         & $\times$                                             \\ \hline
{\cite{unified}}                                            & $\times$                                                  & \begin{tabular}[c]{@{}c@{}} Service\\provision    \end{tabular}                                                                & \begin{tabular}[c]{@{}c@{}}Blind image \\spatial quality \end{tabular} & $\times$                                             \\ \hline
{\cite{10158526}}                                            & \checkmark                                 & \begin{tabular}[c]{@{}c@{}}Information\\ sharing\end{tabular}                                                            & \begin{tabular}[c]{@{}c@{}}Bit error probability  \end{tabular}                                                           & $\times$                                             \\ \hline
{\cite{10529221}}                                            & \checkmark                                                   & \begin{tabular}[c]{@{}c@{}}Service \\ cost  \end{tabular}                                                                     & $\times$                                                                         & $\times$                                             \\ \hline
{\cite{IMFL}}                                            & $\times$                                                  & \begin{tabular}[c]{@{}c@{}}Service\\ performance \end{tabular}                                                                   & Data quality                                                                     & $\times$                                             \\ \hline
{\cite{zhan2024vision}}                                            & \checkmark                                 & \begin{tabular}[c]{@{}c@{}}Task\\ allocation    \end{tabular}                                                                  & \begin{tabular}[c]{@{}c@{}}  Difficulty\\assessment\end{tabular}   & $\times$                                             \\ \hline
{\cite{Wen}}                                            & $\times$                                                  & \begin{tabular}[c]{@{}c@{}}Service \\ cost  \end{tabular}                                                                   & $\times$                                                                         & $\times$                                             \\ \hline
{\cite{10707303}}                                            & $\times$                                                  & \begin{tabular}[c]{@{}c@{}}Service \\performance\end{tabular} & \begin{tabular}[c]{@{}c@{}}Quality and latency \\of image generation    \end{tabular}                                                             & $\times$                                             \\ \hline
{\cite{xu2024cached}}                                            & \checkmark                                 & \begin{tabular}[c]{@{}c@{}}Service \\performance\end{tabular}         & Age 
of thought                                                                              & $\times$                                             \\ \hline
\begin{tabular}[c]{@{}c@{}}Our\\  work\end{tabular} & \checkmark                                 & \begin{tabular}[c]{@{}c@{}}Service \\performance\end{tabular}              & QoE                                                                              & \checkmark                            \\ \hline
\end{tabular}
\end{table}

\subsection{Incentive Mechanisms for AIGC Services}
Mechanisms incentivizing MU participation are critical in scenarios like federated learning (FL) and data trading~\cite{10632200}. For example, Chen~\textit{et al.}~\cite{IMFL} proposed a data quality assessment method for AIGC-generated data samples and designed a reward mechanism to incentivize MUs to participate in FL. Du~\textit{et al.}~\cite{10158526} developed an incentive mechanism based on generative AI and contract theory to encourage MUs to share semantic information. Zhan~\textit{et al.}~\cite{zhan2024vision} designed an AIGC task assignment framework for automated difficulty assessment using a visual language model (VLM) with the aid of contract theory. Fan~\textit{et al.}~\cite{optimal_B1} developed a decentralized incentive mechanism for mobile AIGC service allocation that balances service provision with MU demand. These studies predominantly adopt server-driven pricing models to incentivize MU participation in service provision.

In contrast to mechanisms focused on MU participation, another research direction, similar to our work, focuses on incentivizing ASPs to allocate resources and provide AIGC services. Wen~\textit{et al.}~\cite{Wen} proposed a user-centric incentive mechanism to motivate ASPs to provide AIGC services while incorporating prospect theory to model the subjective utility of clients. Du~\textit{et al.}~\cite{10529221}
extended the scope to multi-MU and multi-ASP scenarios employing contract theory to reward ASPs based on their resource contributions. Although these mechanisms improve flexibility and applicability, they focus primarily on computational resources and overlook joint resource optimization, as well as the unique requirements of AIGC services.
To address the need for AIGC-specific metrics, researchers have proposed mechanisms targeting particular applications. For example, Wang~\textit{et al.}~\cite{unified} introduced metrics to evaluate the precision of wireless perception and the quality of AIGC, linking resource allocation to service quality through a diffusion-based pricing strategy. Ye~\textit{et al.}~\cite{10707303} modeled the relationship between prompt optimization, diffusion denoising steps, and AIGC quality, designing a quality-based contract to improve AIGC generation while reducing latency.
While these works improve the quality of service in AIGC, they are limited to single-MU scenarios and primarily focus on image generation. In addition, the authors in~\cite{xu2024cached} proposed the age-of-thought (AoT) metric and a deep Q-network-based auction to incentivize the provisioning of LLM agents, but focused on cached resources without addressing on-demand services for personalized demands.

 Unlike the above works, our study focuses on AIGC services based on LLMs in multi-MU and multi-ASP scenarios. We design a comprehensive QoE metric that captures MUs' personalized demands across multiple dimensions, tailored to the unique characteristics of text-based content generation. By jointly optimizing computational and communication resources, our mechanism incentivizes ASPs to allocate resources on demand, enabling scalable, efficient, and personalized service provision. A detailed comparison of these approaches is provided in Table~\ref{realted}, highlighting the key contributions of this study.
\begin{table}[t]
    \centering
    \caption{Summary of Main Notations.}
    \label{parameters}
    \begin{tabular}{c||l}
        \hline
        \textbf{Notation} & \textbf{Definition} \\
        \hline \hline
        $n,N$ & Index of an ASP, number of ASPs \\
        \hline
        $m,M$ & Index of an MU, number of MUs \\ 
        \hline
        $a_{0}^{nm}$ & Input message or task provided by MU $m$ to ASP $n$ \\
        \hline
         $a_{i}^{nm}$ & Message chain provided by ASP $n$ to MU $m$ \\
        \hline
        $B_{nm}$ & ASP $n$'s communication resource allocated for MU $m$ \\
        \hline
         $c_{n}^{f}$ & Cost factor per unit of computational resources\\ 
        \hline
        $c_{n}^{B}$ & Cost factor per unit of communication resources\\
        \hline
        $f_{nm}$ & ASP $n$'s computational resource allocated for MU $m$ \\
        \hline
        $K$ & Number of CoT examples \\
        \hline
        $O_{k}^{nm}$ & CoT examples utilized by ASP $n$ in providing the \\ & service to MU $m$ \\       
        \hline
        $\mathcal{Q}_{nm}$ & Quantified QoE value for MU $m$ provided by ASP $n$ \\
        \hline
        $R_{nm}$ & Reward for unit QoE value offered by MU $m$ to ASP $n$ \\
        \hline
        $x^{\text{in}}_{nm}$ & Input token from MU $m$ to ASP $n$ \\
        \hline
        $x^{\text{out}}_{nm}$ & Output token requested from ASP $n$  by MU $m$ \\
        \hline
         $\gamma_{nm}$ & SNR for the communication between MU $m$ and ASP $n$\\
        \hline
        $\hat{\theta}_{nm}$ & Upper bound on the performance gap of the service \\ & provided by ASP $n$ to MU $m$ \\
        \hline
        $\Upsilon(c_{n}^{*})$ & Maximum deviation between the context $c_{n}$ owned by \\ & ASP $n$ and the distribution of true contexts\\
        \hline
         $\zeta(a_{i}^{nm})$ & Ambiguity of chain $(a_{i}^{nm})_{0\le i\le l}$ \\ \hline
        $\mu_{m}$ & Profit conversion factor for MU $m$'s AIGC services \\
        \hline
        $\kappa_{n}$ & Maximum tolerable latency for ASP 
$n$'s service. \\
        \hline
         $\xi_{n}$ & Computational resource required per token of ASP $n$\\
        \hline
    \end{tabular}
\end{table}
\begin{figure*}[t]
	\centering
	\includegraphics[width=\linewidth]{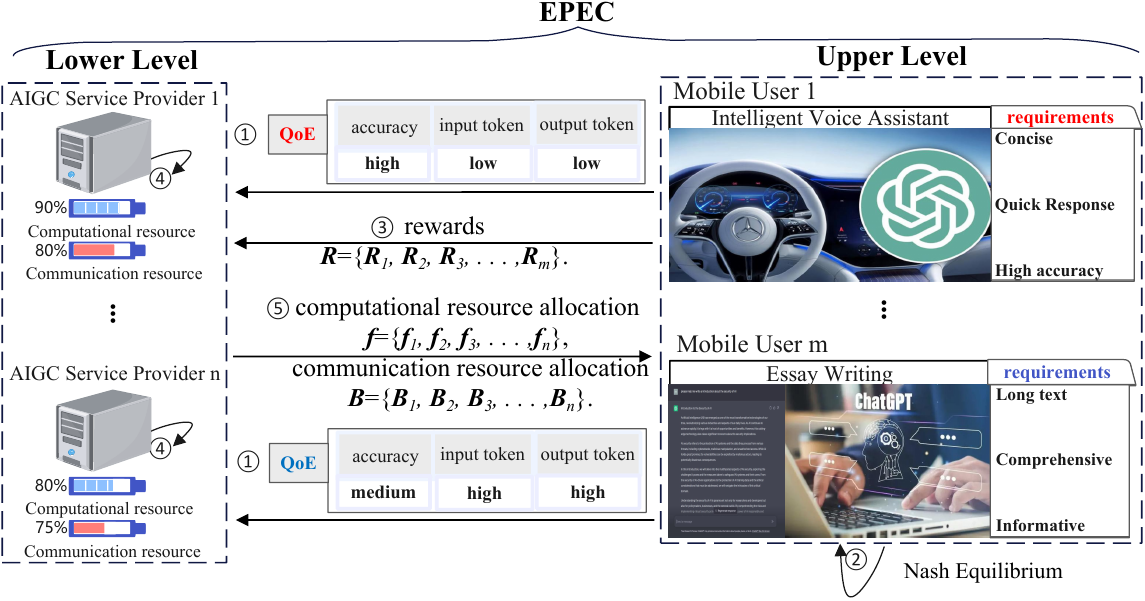}
	\caption{The workflow of the QoE-driven incentive mechanism. Each MU has distinct service requirements reflecting individual preferences. For example, vehicles demand concise and accurate responses to ensure safety, whereas content creators require richer inputs and outputs to support content generation.}
	\label{flow}
\end{figure*}
\section{System Overview}\label{III}
We first present the system model of the incentive mechanism for personalized AIGC services in Section~\ref{iii-A}. We then discuss the ambiguity of language employed in LLMs in Section~\ref{lau}. Following this, we design a novel QoE metric to model MU demands across three dimensions in Section~\ref{QoEnovel}. 
 To simplify the presentation, we summarize the essential notations in Table~\ref{parameters}.

\subsection{System Model}\label{iii-A}
 We consider a scenario involving a set $\mathcal{M} \textrm{=} \left \{ 1,\ldots,M \right \}$  of MUs and a set $\mathcal{N} \textrm{=} \left \{ 1,\ldots,N \right \}$ of ASPs. As depicted in Fig.~\ref{flow}, each MU has unique service requirements that reflect  individual preferences. The requirements for AIGC services vary according to specific applications. For example, in vehicular environments, MUs aim for fast, accurate, and concise responses from AIGC services to enhance safety and convenience. Conversely, when engaged in content creation tasks, MUs prioritize content quality, depth, and relevance over response time. ASPs are resource-limited edge servers equipped with different types of LLMs. For example, ASP 1 is dedicated to vehicular networking, while ASP 2 specializes in academic writing. The objective is to maximize the respective benefits of MUs and ASPs in personalized AIGC trading networks through a flexible user-driven rewarding mechanism and ASPs’ on-demand resource allocation.
 In this mechanism, MUs act as upper-level leaders determining rewards, while ASPs serve as lower-level followers making resource allocation decisions. This structure ensures the provision of personalized AIGC services that effectively meet the diverse demands of MUs.
First, MUs broadcast their specific demands (accuracy, input and output tokens) defined in QoE to ASPs (\ding{172}). Subsequently, the MUs determine the reward per QoE given to ASPs based on a Nash equilibrium (\ding{173}), with the initial rewards set as random values, and transmit them to ASPs (\ding{174}). The ASPs then make resource allocation decisions based on the QoE, cost, and rewards given by the MUs (\ding{175}). Based on the feedback from ASPs (\ding{176}), MUs adjust the rewards to maximize their benefits (\ding{173}). Steps \ding{173} and \ding{176} are repeated until an agreement is reached between MUs and ASPs.
The iterative process ensures mutual benefits and ultimately enhances the AIGC service experience for all MUs.

\begin{figure*}[t]
    \centering
\includegraphics[width=1\linewidth]{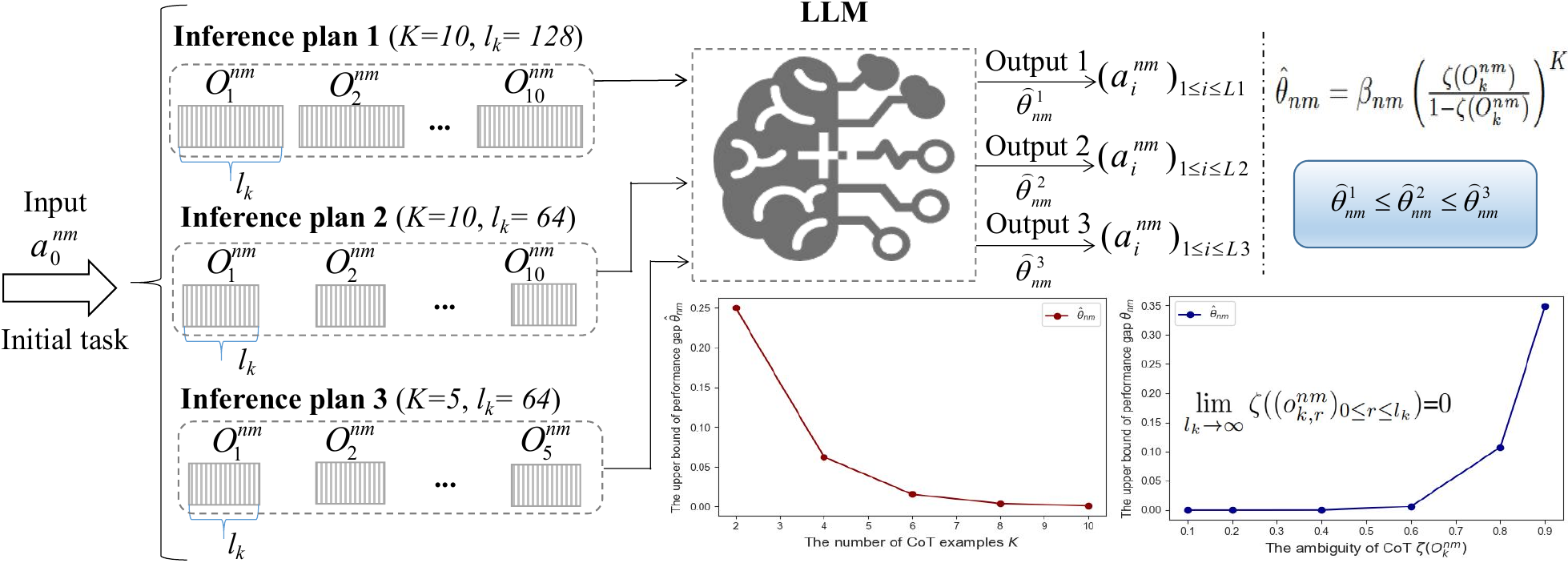}
    \caption{Impact of different CoT examples on performance gap $\hat{\theta}_{nm}$.}
    \label{fig:cot}
\end{figure*}

\subsection{Ambiguity of Language}\label{lau}
Chain-of-Thought (CoT) prompting~\cite{tutunov2023large} is a technique designed to enhance the reasoning capabilities of LLMs. By guiding the model to break down complex problems into logical steps, CoT prompts enable incremental solution derivation that improves contextual understanding and logical consistency. CoT examples play a crucial role in this process. They are a set of exemplary reasoning chains that demonstrate how the model can complete a specific task through step-by-step reasoning. These examples not only help the model learn how to structure its reasoning process but also guide it in generating more accurate and consistent outputs.
However, the inherent ambiguity in natural language poses a significant challenge to LLMs' reasoning abilities. This ambiguity affects their ability to determine precise intentions or contexts behind ambiguous messages or sequences. 

To investigate the impact of this ambiguity on inference performance, we explore and analyze the theoretical foundations underlying CoT prompting in the inference process.
   When serving MU $m$, ASP $n$ is provided with $K$ CoT examples of varying lengths, denoted as $O_{k}^{nm}=(o_{k,r}^{nm})_{0\le r\le l_{k}}$, where $l_{k}$ represents the length of the chain $O_{k}^{nm}$, and each $o_{k,r}^{nm}$ is a sequence of tokens representing a thought or reasoning step. These examples are designed to help ASP $n$ produce correct answers via CoT generation for MU $m$. For all $k$, $O_{k}^{nm}$ are generated with true intentions $(I^{nm})^{*}$ and context $c_{n}^{*}$. Following this, ASP $n$ generates a series of messages denoted as $(a_{i}^{nm})_{0\le i\le l}$ based on the provided CoT examples $O_{k}^{nm}$ and the initial task $a_{0}^{nm}$.
Based on the CoT prompting process, we can adapt the following definition of ambiguity from~\cite{jiang2023latent}.

\begin{myDef}
\textbf{(The ambiguity of the chain)} When ASP $n$  performs the reasoning task for MU $m$, the ambiguity of a message chain $(a_{i}^{nm})_{0\le i\le l}$, derived from true intentions $(I^{nm})^{*}$ and true context $c_{n}^{*}$, is defined as the complement of the likelihood $\hat{q}$ of the context $c_{n}^{*}$
  and intentions $(I^{nm})^{*}$
  conditioned on $(a_{i}^{nm})_{0\le i\le l}$,
i.e., 
\begin{equation}
   \zeta((a_{i}^{nm})_{0\le i\le l})=1-\hat{q}(c_{n}^{*},(I^{nm})^{*}|(a_{i}^{nm})_{0\le i\le l}).
\end{equation}
\end{myDef}

Similarly, the ambiguity of the message chain $\zeta ((O_{k}^{nm})_{1\le k\le K})$ can be defined as the complement of the likelihood of the context $c_{n}^{*}$
  and intentions $(I^{nm})^{*}$
  conditioned on $(O_{k}^{nm})_{1\le k\le K}$,
i.e., 
\begin{equation}
 \zeta((O_{k}^{nm})_{1\le k\le K})=1-\hat{q}(c_{n}^{*},(I^{nm})^{*}|(O_{k}^{nm})_{1\le k\le K}).
\end{equation}

After defining the ambiguity of the chain, we focus on evaluating the overall performance of the model using CoT prompting technology.
To achieve this, we use the concept of ambiguity to assess the model's performance by comparing the likelihood of the results obtained through CoT examples with those derived from the true natural language distribution. This comparison quantifies the performance gap between the service provided by ASP $n$ for MU 
$m$ and the ideal result for MU 
$m$, denoted as
\begin{equation}
\begin{aligned}
\theta_{nm} &= \left| p_{n}\left( (a_{i}^{nm})_{1\le i\le l}|a_{0}^{nm},O_{k}^{nm} \right) \right. \\
& \quad - \left. \hat{q}\left( (a_{i}^{nm})_{1\le i\le l}|a_{0}^{nm},c_{n}^{*} \right) \right|.
\end{aligned}
\end{equation}

To further understand this performance gap, we explore a theoretical framework that considers the distribution of contexts in the training datasets, aiming to gain insights into the upper bound of the gap. 

\begin{myTheo}\label{th1}
    \textbf{(The upper bound of the performance gap)} When ASP $n$ infers the task of MU $m$ based on CoT prompting technology, consider a set of 
$K$ CoT examples  $O_{k}^{nm} = (o_{k,r}^{nm})_{0 \leq r \leq l_k}$, generated from the intention $(I^{nm})^{*}$
with the optimal context $c_{n}^{*} \sim q(c_{n})$. Then, for any message sequence $(a_{i}^{nm})_{1 \leq i \leq l}$, we have~\cite{tutunov2023large}: 
\begin{equation}
  \theta_{nm}\le 
  2 \frac{(\Upsilon(c_{n}^{*}))^{K} \zeta(a_{0}^{nm})}{1-\zeta (a_{0}^{nm})}\prod_{k=1}^{K}\frac{\zeta(O_{k}^{nm})}{1-\zeta (O_{k}^{nm})},
\end{equation}where $a^{nm}_{0}$, sampled from $q(\cdot|(I^{nm}_{0})^{*})$, is the input message or task generated from $(I^{nm}_{0})^{*}$, which is sampled from $q(\cdot|c_{n}^*)$.
 $\Upsilon(c_{n}^{*})=\sup_{c \in C} \frac{\hat{q}(c_{n}^{*})}{\hat{q}(c_{n})}$ is  a  skewness parameter, indicating the maximum deviation between the context $c_n$ owned by ASP $n$ and the distribution of true contexts. 
\end{myTheo}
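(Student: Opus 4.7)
My plan is to work from the definition of the performance gap $\theta_{nm}$ by first re-expressing the model's predictive distribution as a marginalization over the unknown context. Concretely, because ASP $n$ does not observe the true context $c_n^{*}$ directly but infers it through the prompt and CoT examples, I would write
\begin{equation}
p_{n}\!\left((a_{i}^{nm})_{1\le i\le l}\mid a_{0}^{nm},O_{k}^{nm}\right)=\int \hat{q}\!\left((a_{i}^{nm})_{1\le i\le l}\mid a_{0}^{nm},c\right)\hat{q}(c\mid a_{0}^{nm},O_{k}^{nm})\,dc,
\end{equation}
so that $p_n$ is effectively a mixture of the true conditional $\hat{q}(\cdot\mid a_{0}^{nm},c)$ weighted by the posterior over contexts implied by the observed prompt and exemplars. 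This separates the modeling of the response from the modeling of the latent context, which is the natural decomposition for CoT reasoning.

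Given this representation, the ideal distribution $\hat{q}((a_{i}^{nm})_{1\le i\le l}\mid a_{0}^{nm},c_{n}^{*})$ equals the integrand evaluated at $c=c_{n}^{*}$. Subtracting it from the above integral and applying the triangle inequality against a point mass at $c_{n}^{*}$, I would obtain a bound of the form $\theta_{nm}\le 2\bigl(1-\hat{q}(c_{n}^{*}\mid a_{0}^{nm},O_{k}^{nm})\bigr)$; the factor $2$ is the usual total-variation slack between the posterior and the Dirac at the true context. This step converts the problem of bounding a distributional gap on message sequences into the easier problem of lower-bounding the posterior mass placed on the correct context.

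Next I would attack $1-\hat{q}(c_{n}^{*}\mid a_{0}^{nm},O_{k}^{nm})$ via Bayes' rule, invoking the conditional-independence assumption that, given the true intention $(I^{nm})^{*}$ and true context $c_{n}^{*}$, the $K$ exemplars $O_{k}^{nm}$ and the task $a_{0}^{nm}$ are independently generated. This factorizes the joint likelihood into a product over $k$, and matching each factor to $1-\zeta(O_{k}^{nm})=\hat{q}(c_{n}^{*},(I^{nm})^{*}\mid O_{k}^{nm})$ from Definition~1, along with the analogous expression for $a_{0}^{nm}$, produces the ratios $\zeta(O_{k}^{nm})/(1-\zeta(O_{k}^{nm}))$ and $\zeta(a_{0}^{nm})/(1-\zeta(a_{0}^{nm}))$. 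The skewness parameter $\Upsilon(c_{n}^{*})$ then enters when the prior $\hat{q}(c_{n}^{*})$ is compared with $\hat{q}(c)$ over the support of the posterior: because each of the $K$ examples contributes one Bayes-rule pass, the bound accrues a factor of $(\Upsilon(c_{n}^{*}))^{K}$.

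The main obstacle will be the bookkeeping in the Bayes-rule step: I must track how normalizing constants in the conditional posteriors cancel to leave exactly the ambiguity ratios, and I must justify the conditional independence of the $K$ CoT examples so that the product form $\prod_{k=1}^{K}\zeta(O_{k}^{nm})/(1-\zeta(O_{k}^{nm}))$ appears cleanly. Since the statement is drawn from~\cite{tutunov2023large}, I would mirror their information-theoretic argument while only adapting notation to the multi-ASP/multi-MU setting here, verifying that the redefinition of $\hat{q}$ and $\Upsilon$ on the per-ASP context distribution does not alter the structure of the derivation.
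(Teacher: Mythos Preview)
The paper does not give its own proof of this theorem: it simply states the bound and attributes it to~\cite{tutunov2023large}, then moves directly to the discussion of $\Upsilon(c_{n}^{*})=1$ and Lemma~1. So there is nothing to compare against on the paper's side beyond the citation itself.

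Your plan is therefore the right one in spirit---reproduce the argument from~\cite{tutunov2023large} and relabel it for the ASP/MU setting---and your final paragraph already says as much. The sketch you give (marginalize over the latent context, bound the gap by twice the posterior mass off $c_{n}^{*}$, then factorize via Bayes and conditional independence of the $K$ exemplars to extract the ambiguity ratios and the $(\Upsilon(c_{n}^{*}))^{K}$ prior-skewness factor) matches the structure of that reference's derivation. The only caution is that your intermediate bound $\theta_{nm}\le 2(1-\hat{q}(c_{n}^{*}\mid a_{0}^{nm},O_{k}^{nm}))$ drops the intention $(I^{nm})^{*}$, whereas the ambiguity in Definition~1 is joint in $(c_{n}^{*},(I^{nm})^{*})$; when you carry out the Bayes-rule bookkeeping you will need to keep both latent variables in play so that the factors $1-\zeta(\cdot)$ line up exactly with the definition. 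Otherwise your outline is consistent with what the cited source does, and the paper itself adds nothing further.
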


If $\Upsilon(c_{n}^{*})=1$, it implies that natural language does not discriminate among specific contexts. This condition holds when dealing with sufficiently large and well-balanced datasets~\cite{liu2022selfsupervised,LLM_uniform}. 
Moreover, under certain conditions,  a geometrical convergence rate can be obtained as follows.

\begin{myLemm}\label{lemma1}
 Consider a set of CoT examples $O_{k}^{nm} = (o_{k,r}^{nm})_{1\le r \le l_{k}} $ satisfying the following conditions~\cite{xu2024cached}:
 
1) The CoT examples $O_{k}^{nm}$ are carefully selected such that the true $c_{n}^{*}$ can be recovered from $O_{k}^{nm}$ with relatively high certainty. 

2) As the length of the sequence grows, the associated ambiguity measure $\zeta(O_{k}^{nm})$ diminishes.  

Then, for any fixed $\eta \in [0,\frac{1}{2})$, there exists a length threshold $l_{k,\eta }^{*} \in \mathbb{N}$ , such that for any $l_{k} \ge l_{k,\eta }^{*}$, we can have:
\begin{equation}
    \zeta(O^{nm}_{k})\le \eta .
\end{equation}
\end{myLemm}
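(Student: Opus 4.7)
The plan is to recognize this lemma as essentially a formal restatement of the convergence of the ambiguity measure $\zeta(O_{k}^{nm})$ to zero, so the main work is to translate the two qualitative conditions into a genuine limit and then invoke the standard $\epsilon$--$N$ definition. First, I would return to the defining equation $\zeta(O_{k}^{nm}) = 1 - \hat{q}(c_{n}^{*},(I^{nm})^{*}\,|\,O_{k}^{nm})$ and observe that $\zeta(O_{k}^{nm}) \in [0,1]$ since it is one minus a conditional likelihood. Condition 1 is then interpreted as saying that, for sufficiently informative (long) chains $O_{k}^{nm}$, the posterior likelihood $\hat{q}(c_{n}^{*},(I^{nm})^{*}\,|\,O_{k}^{nm})$ can be driven arbitrarily close to $1$, since the true context and intention can be recovered with increasing certainty as additional reasoning steps are incorporated.

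Next, I would formalize condition 2 as the limit statement $\lim_{l_{k}\to\infty}\zeta(O_{k}^{nm}) = 0$. Combined with condition 1 this is justifiable: each additional token $o_{k,r}^{nm}$ in the chain narrows the set of contexts and intentions consistent with the observed prefix, and by condition 1 the posterior concentrates on $(c_{n}^{*},(I^{nm})^{*})$ in the limit. Hence the complementary probability, which is precisely $\zeta(O_{k}^{nm})$, converges monotonically (or at least eventually) to zero.

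With this limit in hand, the conclusion follows immediately from the $\epsilon$--$N$ characterization of convergence. For any fixed $\eta \in [0,\tfrac{1}{2})$, choose $\epsilon = \eta$ (or any positive number not exceeding $\eta$ when $\eta>0$; the case $\eta=0$ is trivial since $\zeta \ge 0$ and the limit is $0$). The definition of the limit then yields a threshold $l_{k,\eta}^{*}\in \mathbb{N}$ such that for all $l_{k} \ge l_{k,\eta}^{*}$,
\begin{equation}
    \bigl|\zeta(O_{k}^{nm}) - 0\bigr| = \zeta(O_{k}^{nm}) \le \eta,
\end{equation}
which is precisely the claim. The restriction $\eta < \tfrac{1}{2}$ plays no role in the existence of the threshold itself; it is imposed to keep the ambiguity strictly below the chance level, so that the bound is meaningful when plugged back into Theorem~\ref{th1}.

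The only real obstacle is the interpretive one: conditions 1 and 2 are stated qualitatively, and one must make the minor leap that ``diminishes as the length grows'' means convergence to $0$ rather than merely being decreasing. Once this reading is adopted, the lemma reduces to a one-line application of the definition of convergence, and no further probabilistic machinery or inequality is required.
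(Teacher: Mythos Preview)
Your proposal is correct and follows essentially the same approach as the paper's proof: interpret condition~2 as $\lim_{l_{k}\to\infty}\zeta(O_{k}^{nm})=0$ and then invoke the $\epsilon$--$N$ definition of a limit to extract the threshold $l_{k,\eta}^{*}$. The paper's proof is even more terse, simply noting that condition~1 is hard to exploit rigorously and that condition~2 alone yields the limit and hence the conclusion.
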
 

\begin{proof}
In practice, satisfying condition 1) is challenging as there is no rigorous procedure to quantify ambiguity for a given sequence of thoughts. However, under condition 2), we can obtain $\lim\limits_{l_{k} \to \infty} \zeta((o_{k,r}^{nm})_{0\le r\le l_{k}})\textrm{=}0$. Hence, there exists $l_{k, \eta} \in \mathbb{N}$ such that for any $l_k \ge l_{k, \eta}^{*}$, we have $\zeta(O^{nm}_{k})\le \eta$.
\end{proof}
 As the number of CoT examples satisfying the conditions increases, a geometrical convergence rate for the gap $\theta_{nm}$ can be established as 
 \begin{equation}\label{boundused}
    \theta_{nm}\le 
  \beta_{nm}(\frac{\eta}{1-\eta})^{K},
\end{equation}where $\beta_{nm}=2 \frac{(\Upsilon(c_{n}^{*}) )^{K}\zeta(a_{0}^{nm})}{1-\zeta (a_{0}^{nm})}$  and $\frac{\eta}{1-\eta} \in [0,1)$.

According to the analysis of \textbf{Theorem ~\ref{th1}} and \textbf{Lemma~\ref{lemma1}}, the performance gap is primarily influenced by the number of CoT examples $K$ and the reasoning step length $l_{k}$. As illustrated in Fig.~\ref{fig:cot}, there is a clear functional relationship between the number of CoT examples 
$K$ and the performance gap 
$\theta$, demonstrating that an increased number of CoT examples significantly reduces the performance gap, leading to higher accuracy in AIGC.  
In particular, $\theta\rightarrow 0$ indicates high-quality AIGC, while $\theta\rightarrow \infty$ implies a significant performance gap and low accuracy. However, the impact of the reasoning step length 
$l_{k}$ on the performance gap does not exhibit a clear functional relationship. To further validate the theoretical analysis, we conducted experiments on an open platform to empirically examine how reasoning step length affects the accuracy of inference results in real-world scenarios.
As illustrated in Fig.~\ref{fig:MPEng}, the ASP can choose CoT examples with different reasoning step lengths for approaching the final answer\footnote{ The experiments are conducted on the ScienceQA benchmark~\cite{zhang2023multimodal}. This large-scale multimodal science question dataset contains 21k multiple-choice questions across 3 subjects, 26 topics, 127 categories, and 379 skills, divided into training, validation, and test splits with 12726, 4241, and 4241 examples, respectively. This setting aimed to demonstrate the efficacy of multimodal-CoT by leveraging both language and vision modalities in a two-stage reasoning framework.}. The results indicate that as the reasoning step length increases, the associated ambiguity measure $\zeta(O_{k}^{nm})$ decreases, leading to a smaller $\hat{\theta}_{nm}$ and thus reducing the performance gap. In conclusion, when using CoT with a larger number of steps, LLMs often produce more accurate results, as revealed in \textbf{Lemma~\ref{lemma1}} and literature~\cite{jin2024impact}.

\begin{figure*}[t]
    \centering
\includegraphics[width=1\linewidth]{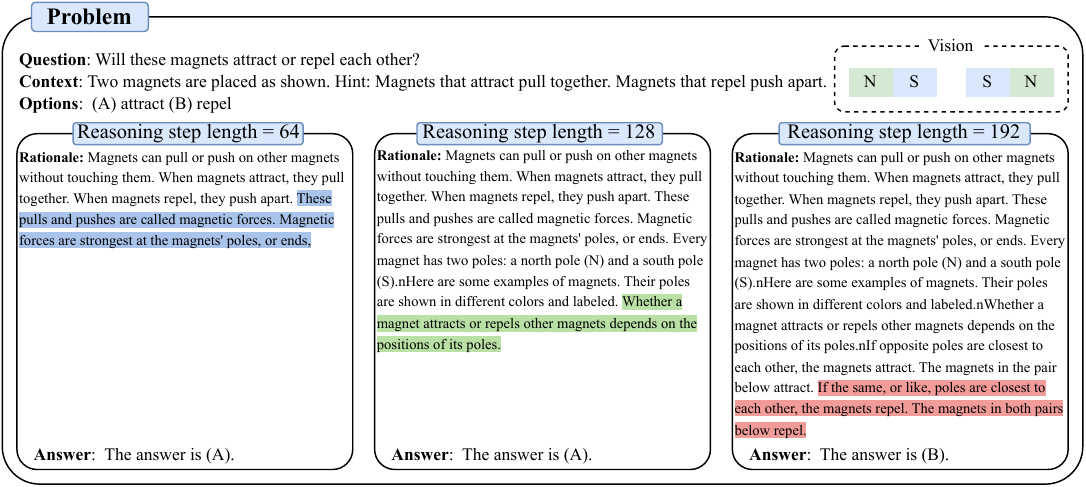}
    \caption{Example of different reasoning step lengths in CoT. When the number of CoT examples is fixed, using longer reasoning steps within each example generally improves accuracy. ASPs can therefore select CoT examples with varying step lengths to better approach the final answer.}
    \label{fig:MPEng}
\end{figure*}

\subsection{QoE of AIGC Services}\label{QoEnovel}
AIGC services exhibit a complex interdependence among accuracy, token count, and service latency. Due to the autoregressive nature of LLMs, computational demand scales quadratically with the number of tokens, as self-attention requires each token to compute relationships with all preceding tokens. Furthermore, improving accuracy incurs growing resource costs that further amplify latency. Compared to traditional services, AIGC faces more intricate multi-dimensional trade-offs, requiring an effective balance between cost and performance.

To this end, we propose a novel QoE metric specifically designed for AIGC services, integrating three key dimensions: accuracy, token count, and response timeliness. By explicitly modeling the trade-offs among these factors, this metric provides a unified framework for evaluating both technical performance and user experience. 
We utilize the derived upper bound $\hat{\theta}_{nm}= \beta_{nm}(\frac{\eta}{1-\eta})^{K}, \ n \in \mathcal{N}, \ m \in \mathcal{M}$ to evaluate the output accuracy. As demonstrated in Section~\ref{lau}, a smaller $\hat{\theta}_{nm}$ indicates that the model's output is closer to the ideal result, implying higher accuracy. Achieving  higher accuracy typically requires more computational effort, and the computational demand for inference is influenced by the number of tokens processed. 

Moreover, ASP $n$ needs to allocate computational resource $\bm{f}_{n}\triangleq\left(f_{n1}, f_{n2}, \ldots, f_{nm}\right)$ and bandwidth $\bm{B}_{n}\triangleq\left( B_{n1}, B_{n2}, \ldots, B_{nm}\right)$ for content generation and transmission. Considering these factors jointly, the QoE of the AIGC service provided by ASP $n$ to MU $m$ is defined as
\begin{equation}
\begin{aligned}
    \mathcal{Q}_{nm} = \kappa_{n} &- \frac{\xi_{n} \ln(1/\hat{\theta}_{nm}) \sum_{i=0}^{x^{\text{out}}_{nm}-1} \left( x^{\text{in}}_{nm} + i \right)}{f_{nm}} \\&- \frac{32 (x_{nm}^{\text{in}} + x_{nm}^{\text{out}})}{B_{nm} \log_2 (1 + \gamma_{nm})},
\end{aligned}
\end{equation}
where $\kappa_{n}$ denotes the maximum latency threshold for the service. This threshold is crucial for on-demand service provision, allowing the model to flexibly adapt to real-world application constraints. The computational cost of processing $x^{\text{in}}_{nm}$ input tokens (1 token $\approx$ 4 bytes = 32 bits) and generating $x^{\text{out}}_{nm}$ output tokens is $\xi_{n}\sum_{i=0}^{x^{\text{out}}_{nm}-1} \left( x^{\text{in}}_{nm} + i \right)$, where $\xi_{n}$ denotes the computational resource required per token. This quadratic growth underscores the necessity of efficient resource management. The $\ln(1/\hat{\theta}_{nm})$ is defined as the accuracy cost. As the performance gap $\hat{\theta}_{nm}$ decreases, indicating a higher accuracy demand for AIGC, the corresponding accuracy cost increases logarithmically~\cite{xu2024cached}. We adopt orthogonal frequency division multiple access (OFDMA) for data transmission to avoid interference between communication links~\cite{singhal2017resource}. $\gamma_{nm} =\frac{g_{nm}p_{n}^{t}}{\sigma_{nm}^{2}}$ denotes the signal-to-noise ratio (SNR) for the communication between MU $m$ and ASP $n$, where $g_{nm}$ is the channel gain, $p_{n}^{t}$ is the transmission power of ASP $n$,
and $\sigma$ is the additive white Gaussian noise (AWGN).

The parameters $\hat{\theta}_{nm}$, $x^{\text{in}}_{nm}$, and $x^{\text{out}}_{nm}$ in the QoE correspond to ``accuracy'', ``input token'', and ``output token'' in Fig.~\ref{flow}, respectively. MUs can adjust accuracy and input/output token count to achieve personalized AIGC services within the maximum tolerable latency set by the ASP. This flexibility enables the QoE metric to adapt to diverse application scenarios, ensuring optimized service quality tailored to both system constraints and user preferences.

\section{Problem Formulation}\label{IV}
We first define the utility functions of ASPs and MUs in Section \ref{IV-A} and \ref{IV-B}. Then, we model the interaction between MUs and ASPs as an EPEC in Section \ref{IV-C}.

\subsection{Utility of ASPs}\label{IV-A}
ASPs aim to provide AIGC services that satisfy MUs' demands while maximizing their rewards. The utility function of ASP $n$ comprises two key components: the rewards received from MUs and the costs associated with generating and transmitting AIGC. The optimization problem for ASP $n$ can be formulated as
\begin{small}
\begin{equation}
\begin{aligned}
\max\, \, &U^{asp}_{n}(\bm{f}_{n}, \bm{B}_{n})=\sum_{m=1}^M(R_{nm}\mathcal{Q}_{nm}-c_{n}^{f}f_{nm}-c_{n}^{B}B_{nm}),\\
s.t.\ \, &C1: \mathcal{Q}_{nm} \geq 0,\\
&C2: \sum_{m=1}^Mf_{nm} \leq  f_{n}^{\max}, \,\, \sum_{m=1}^MB_{nm} \leq  B_{n}^{\max},\\
\end{aligned}
\end{equation}
\end{small}where $R_{nm}$ is the reward for unit QoE value, and $\mathcal{Q}_{nm}$ denotes the quantified QoE provided by ASP $n$ to MU $m$. The decision variables $f_{nm}$ and $B_{nm}$ are the computational and communication resources allocated by ASP $n$ for MU $m$, respectively. The cost parameters $c_n^f$ and $c_n^B$ represent the unit costs of these resources. 
Constraint C1 ensures that the services provided by ASP $n$ are meaningful and viable, which is achieved by controlling the allocation of computational and communication resources\footnote{While negative utility may arise from service constraints $\mathcal{Q}_{nm} \geq 0$, it reflects rational short-term tradeoffs (e.g., maintaining market presence). In practice, non-negativity can be guaranteed by adding a baseline cost compensation $c_{n}^{f}f_{n}^{\max}+c_{n}^{B}B_{n}^{\max}$ to $U_{n}^{asp}$. This compensation reflects the maximum costs that an ASP may incur, ensuring that its utility function remains non-negative even in the worst-case scenario with minimal rewards. Since this term is a constant offset, it does not affect the optimal strategies or equilibrium analysis~\cite{Transformations} and is therefore omitted for simplicity.}. The constraint C2 ensures that the total resources of ASP $n$ remain within their respective limits, where $f_{n}^{\max}$ and $B_{n}^{\max}$ represent the maximum available computational and communication resources.
The utility function is designed to capture the trade-off between the rewards received from MUs and the costs incurred in providing AIGC services. By maximizing this utility, the ASP determines the optimal joint resource allocation, i.e., $\bm{f}_{n}$ and $\bm{B}_{n}$, which increases profitability and reduces resource consumption.

\subsection{Utility of MUs}\label{IV-B} 
Higher QoE improves user satisfaction, but offering higher rewards to ASPs increases costs. The utility function of MU $m$ captures this trade-off between service satisfaction and cost. To quantify service satisfaction, we use a gain function that models the benefits derived from AIGC services.
Specifically, we adopt a logarithmic gain function~\cite{8758205}, defined as $\mathcal{G}_{nm}=\mu_{m}ln(1\textrm{+}\sum_{n=1}^N\mathcal{Q}_{nm})$, where $\mu_{m}$ is the profit conversion coefficient.
The logarithmic form captures diminishing returns, meaning that as QoE increases, user satisfaction improves but at a decreasing rate. This reflects the realistic observation that users experience higher marginal gains at lower QoE levels, while improvements beyond a certain threshold yield smaller incremental benefits. 
Building on this gain function, the optimization problem for MU $m$ is formulated as
\begin{equation}\label{user}
\begin{aligned}
\max \, \,&U^{mu}_{m}(\bm{R}_{m})=\mu_{m}ln(1\textrm{+}\sum_{n=1}^N\mathcal{Q}_{nm})-\sum_{n=1}^NR_{nm}\mathcal{Q}_{nm} ,\\
s.t.\, \,  & R_{m}^{\min} \leq R_{nm} \leq R_{m}^{\max},  \ n \in \mathcal{N},\\
\end{aligned}
\end{equation}where $\bm{R}_{m} \triangleq \left( R_{1m}, \ldots, R_{Nm} \right)$ is the reward profile of MU $m$, with $R_{m}^{\min}$ and $R_{m}^{\max}$ denoting its minimum and maximum payable rewards, respectively. By optimizing $R_{nm}$, the MU can balance the service cost with the satisfaction derived, ensuring cost-effective access to personalized AIGC services. Given the competition for limited ASP resources, a higher reward from one MU can lead to better service, potentially reducing the QoE for others. Since MUs act selfishly and independently, their interactions naturally form a non-cooperative game, termed the multi-MU reward game $\psi$, where MUs adjust their rewards to maximize satisfaction while minimizing costs.

\begin{myDef} 
A multi-MU reward game is a tuple $\psi =\left \{ \mathcal{M}, \bm{R}, \bm{U^{mu}}\right \}$ defined by 
\begin{itemize}{
		\item {Players: The set of MUs.}
		\item { Strategies: The reward decisions $\bm{R}_{m}$ of any MU $m$.}
		\item { Utilities: The vector $\bm{U^{mu}}=\left \{ U^{mu}_{1},  \ldots,  U^{mu}_{M} \right \}$
			contains the utility functions of all the MUs defined in (\ref{user}).}}
\end{itemize}
\end{myDef} 

ASPs optimize resource allocation based on the rewards provided by MUs to maximize their benefits. Meanwhile,  MUs strategically adjust these rewards to obtain AIGC services that meet their requirements. This interdependent decision-making process creates a hierarchical game structure, where each entity optimizes its utility, leading to a complex equilibrium dynamic. Such interactions naturally follow a multi-leader, multi-follower framework, with MUs as the leaders and ASPs as the followers. In the following section, we formalize this relationship as an EPEC problem, establishing the foundation for equilibrium analysis.

\subsection{Equilibrium
Problem with Equilibrium Constraints}\label{IV-C}
We model the incentive mechanism between ASPs and MUs
as an equilibrium problem with equilibrium constraints (EPEC)~\cite{SE}, where MUs are leaders and ASPs are followers.
At the upper level, MUs determine the rewards based on
ASPs' responses and decisions of other MUs. At the lower
level, each ASP optimizes the allocation of computational and
communication resources by considering its constraints, costs,
and rewards from MUs.
This setup results in a Stackelberg equilibrium, where ASPs (followers) choose their best responses, and MUs (leaders) maximize their utilities. Our objective is to achieve a hierarchical equilibrium, characterized by a Nash equilibrium among MUs and a Stackelberg equilibrium between MUs and ASPs. The equilibria for the two levels are defined as follows.
\begin{myDef}
    Let $(\bm{f}^{*}_{n}, \bm{B}^{*}_{n})$ and $\bm{R}^{*}_{m}$ denote the optimal resource allocation of ASP $n$ and the optimal reward decision of MU $m$, respectively. Then, the points $(\bm{f}^{*}_{n}, \bm{B}^{*}_{n})$ and $\bm{R}^{*}_{m}$ are the equilibria at two levels if the following conditions hold:
\begin{equation}
\begin{aligned}
&U^{asp}_{n}\left((\bm{f}_{n}^{*},\bm{B}_{n}^{*}),\bm{R}_{m}^{*}\right)\geq U^{asp}_{n}((\bm{f}_{n},\bm{B}_{n}),\bm{R}_{m}^{*}), \\
&U^{mu}_{m}\left ( \bm{f}_{n}^{*}(\bm{R}_{m}^{*},\bm{R}_{-m}^{*}), \bm{B}_{n}^{*}(\bm{R}_{m}^{*},\bm{R}_{-m}^{*}) ,\bm{R}_{m}^{*},\bm{R}_{-m}^{*}\right )\geq \\&U^{mu}_{m}\left ( \bm{f}_{n}^{*}(\bm{R}_{m}^{*},\bm{R}_{-m}^{*}), \bm{B}_{n}^{*}(\bm{R}_{m}^{*},\bm{R}_{-m}^{*}), \bm{R}_{m},\bm{R}_{-m}^{*}\right ),\\
&\forall n \in \mathcal{N}, \ m\in \mathcal{M},
\end{aligned}
\end{equation}where $\bm{R}_{-m}$ denotes the reward profile of all MUs except MU $m$.
\end{myDef}
In summary, 
the MUs' optimization problems can be
formulated as the following EPEC problems:
    \begin{equation}\label{pro-epec}
\begin{aligned}
& \max_{\bm{R}_{m}} \, \, U^{mu}_{m}=\mu_{m}ln(1\textrm{+}\sum_{n=1}^{N}\mathcal{Q}_{nm}^{*} )- \sum_{n=1}^NR_{nm}\mathcal{Q}_{nm}^{*}, \\
& \text{s.t.} \quad \left\{ \begin{aligned}
 & R_{m}^{\min} \leq R_{nm} \leq R_{m}^{\max},  \ n \in \mathcal{N},\\
&\mathcal{Q}_{nm}^{*}=\kappa_{n}- \frac{\xi_{n} \ln(1/\hat{\theta}_{nm}) \sum_{i=0}^{x^{\text{out}}_{nm}-1} \left( x^{\text{in}}_{nm} + i \right)}{f_{nm}^{*}}\\& \,\, \, \, \, \, \,  \, \, \, \, \, \, \, \, \,\, \, \, \, \, \, \, \, \, \, \, - \frac{32 (x_{nm}^{\text{in}} + x_{nm}^{\text{out}})}{B_{nm}^{*} \log_2 (1 + \gamma_{nm})},\\
 &(\bm{f}_{n}^{*},\bm{B}_{n}^{*}) = \arg\max  U^{asp}_{n}(\bm{f}_{n}, \bm{B}_{n}), \\
 &\qquad\qquad\quad\text{subject to} \, C1, C2.
\end{aligned} \right.
\end{aligned}
\end{equation}
To solve the above EPEC, we use backward induction to address the lower-level (utility maximization for ASPs) and upper-level (non-cooperative game among MUs) problems.

\section{Equilibrium Analysis And Solutions}\label{V}
In this section, we use backward induction to investigate the existence and uniqueness of equilibria for the EPEC, i.e., the optimal resource allocation of ASPs and reward strategies of MUs.
\subsection{Lower Level: Optimal Resource Allocation for ASPs}
In the lower level of EPEC, for any reward decisions $\bm{R}$ given by MUs, ASP $n$ aims to determine its optimal computational and communication resource allocations, i.e., $\bm{f}_{n}^{*}$ and $\bm{B}_{n}^{*}$, to maximize its utility $U^{asp}_{n}(\bm{f}_{n}, \bm{B}_{n})$. In the following, we analyze and derive the unique optimal allocation.
\begin{myTheo}\label{lower_the}
 The utility maximization problem for ASP $n$ has a unique optimal solution $(\bm{f}_{n}^{*},\bm{B}_{n}^{*})$.
 \end{myTheo}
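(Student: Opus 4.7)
The plan is to recognize the lower-level problem as a strictly concave maximization over a compact convex set, so that existence and uniqueness of the optimizer follow from standard convex analysis, and then to characterize $(\bm{f}_n^{*}, \bm{B}_n^{*})$ through the KKT system for use in the upper-level game.

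First I would verify joint strict concavity of $U_n^{asp}$ in $(\bm{f}_n, \bm{B}_n)$. Each $\mathcal{Q}_{nm}$ depends only on the pair $(f_{nm}, B_{nm})$ and takes the form $\kappa_n - A_{nm}/f_{nm} - C_{nm}/B_{nm}$, where $A_{nm}$ and $C_{nm}$ are positive constants absorbing the token counts, the accuracy term $\ln(1/\hat{\theta}_{nm})$, and the channel rate $\log_2(1+\gamma_{nm})$. On the positive orthant the maps $f\mapsto -A_{nm}/f$ and $B\mapsto -C_{nm}/B$ are strictly concave with vanishing cross partial, so $\mathcal{Q}_{nm}$ is strictly concave on $\mathbb{R}_{>0}^{2}$. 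Scaling by $R_{nm}>0$ and subtracting the linear cost $c_n^{f} f_{nm}+c_n^{B} B_{nm}$ preserve strict concavity, and since the blocks $(f_{nm},B_{nm})$ across $m$ are disjoint, summation over $m$ preserves it jointly on the full decision space.

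Next I would argue that the feasible set is non-empty, convex, and compact. Constraints C2 together with the implicit positivity of the decision variables define a bounded polyhedron; constraint C1 is the super-level set $\{\mathcal{Q}_{nm}\ge 0\}$ of a concave function and is therefore convex, and the finite intersection of convex sets is convex. Because $\mathcal{Q}_{nm}\to -\infty$ as $f_{nm}\to 0^{+}$ or $B_{nm}\to 0^{+}$, the feasible set stays bounded away from the coordinate hyperplanes and is closed, hence compact. Weierstrass' theorem then yields existence of a maximizer, and strict concavity on a convex domain forces it to be unique.

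To pin down $(\bm{f}_n^{*},\bm{B}_n^{*})$ explicitly, I would introduce multipliers $\lambda_f^{n},\lambda_B^{n}\ge 0$ for the two budgets in C2 and $\mu_{nm}\ge 0$ for the QoE constraints in C1, form the Lagrangian, and set its partials with respect to $f_{nm}$ and $B_{nm}$ to zero. This yields expressions of the shape $f_{nm}^{*}=\sqrt{(R_{nm}+\mu_{nm})A_{nm}/(c_n^{f}+\lambda_f^{n})}$ and analogously for $B_{nm}^{*}$, with the dual variables fixed by complementary slackness. The main obstacle will be disentangling the active set: the C1 constraints may bind for MUs whose rewards are too small to justify positive QoE, while the coupling multipliers $\lambda_f^{n},\lambda_B^{n}$ require a water-filling-style argument to determine which budgets are tight. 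Once Slater's condition is checked on the parameter regime of interest, KKT becomes both necessary and sufficient, and strict concavity guarantees that the resulting solution is the unique global optimum, completing the proof.
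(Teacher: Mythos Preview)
Your argument is correct and follows essentially the same route as the paper: both establish strict concavity of $U_n^{asp}$ on the positive orthant---the paper by writing out the block-diagonal Hessian and checking that every diagonal entry is strictly negative (using $R_{nm}\ge R_m^{\min}>0$), you by invoking the concavity of $-1/x$ directly---and then conclude uniqueness of the maximizer over the convex feasible region. The paper stops there, citing a standard reference rather than arguing compactness and Weierstrass explicitly, so your feasible-set discussion and the KKT/water-filling characterization, while sound and useful for the upper level, go beyond what the paper's proof of Theorem~\ref{lower_the} actually contains.
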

\begin{proof}
We first examine the Hessian matrix of ASP $n$'s utility $U^{asp}_{n}(\bm{f}_{n},\bm{B}_{n})$ with respect to $f_{nm}$ and $B_{nm}$. Let this Hessian matrix be denoted by 
$\bm{H}_{n}$, which can be block-diagonalized as
    \begin{equation}
        \bm{H}_{n} = \begin{bmatrix}
        \bm{H}_{n}^{f} & 0 \\
        0 & \bm{H}_{n}^{B}
        \end{bmatrix}.
    \end{equation}
     The block matrix $\bm{H}^{f}_{n}$ can be computed as the second-order partial derivative of $U^{asp}_{n}(\bm{f}_{n},\bm{B}_{n})$ with respect to $f_{nm}$, i.e.,
     \begin{small}
         \begin{equation}
 \begin{aligned}
    \bm{H}^{f}_{n} = -\mathrm{diag}\left[
         \frac{2R_{n1}\sum_{i=0}^{x^{\text{out}}_{n1}-1} \left( x^{\text{in}}_{n1} + i \right)\xi_{n} \ln(1/\hat{\theta}_{n1})}{f^{3}_{n1}},\right.&\ \\
         \left. \frac{2R_{n2}\sum_{i=0}^{x^{\text{out}}_{n2}-1} \left( x^{\text{in}}_{n2} + i \right)\xi_{n} \ln(1/\hat{\theta}_{n2})}{f^{3}_{n2}},\right.&\ \\
         \left. \cdots, \right.&\ \\
         \left. \frac{2R_{nM}\sum_{i=0}^{x^{\text{out}}_{nM}-1} \left( x^{\text{in}}_{nM} + i \right)\xi_{n} \ln(1/\hat{\theta}_{nM})}{f^{3}_{nM}}\right] &< \bm{0}.
 \end{aligned}
\end{equation}
     \end{small}
    Similarly, the block matrix $\bm{H}^{B}_{n}$ can be calculated by
    \begin{equation}
    \begin{aligned}
  \bm{H}^{B}_{n} = -\mathrm{diag}\left[
         \frac{64(x_{n1}^{\text{in}} + x_{n1}^{\text{out}})R_{n1}}{B^{3}_{n1}\log_{2}(1+\gamma_{n1})},\right.&\ \\
         \left. \frac{64(x_{n2}^{\text{in}} + x_{n2}^{\text{out}})R_{n2}}{B^{3}_{n2}\log_{2}(1+\gamma_{n2})},\right.&\ \\
         \left. \cdots, \right.&\ \\
         \left. \frac{64(x_{nM}^{\text{in}} + x_{nM}^{\text{out}})R_{nM}}{B^{3}_{nM}\log_{2}(1+\gamma_{nM})}\right] &< \bm{0}.
    \end{aligned}
    \end{equation}
    
Since $R_{nm}\geq R_{nm}^{\min}>0$, it can be shown that both $\bm{H}^{f}_{n}$ and $\bm{H}^{B}_{n}$ are diagonal matrices with strictly negative diagonal elements. Thus, the block-diagonal Hessian matrix $\bm{H}_{n}$ is negative definite, which implies that the utility function $U^{\text{asp}}_{n}(\bm{f}_{n}, \bm{B}_{n})$ is strictly concave and continuous.
Furthermore, based on the uniqueness condition established in~\cite{zhang2011game}, it follows that ASP $n$ has a unique optimal solution $(\bm{f}_{n}^{*}, \bm{B}_{n}^{*})$.
\end{proof}

\subsection{Upper Level: Optimal Reward Equilibrium among MUs}
In the upper level of the EPEC, each MU $m$ competes with other MUs and determines its reward vector $\bm{R}_{m}$. Given the ASPs' responses $(\bm{f}, \bm{B})$ and other MUs' decisions $\bm{R}_{\textrm{-}m}$,  MU $m$ selects its optimal reward $\bm{R}_m$ by maximizing its utility $U^{\text{mu}}_m(\bm{R}_m)$.

\begin{myTheo} \label{upper_the}
A unique Nash equilibrium (NE) exists in the multi-MU reward game $\psi$.
\end{myTheo}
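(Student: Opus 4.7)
The plan is to reduce the upper-level game to a standard strategic-form game and then apply a two-step argument: Debreu-Glicksberg-Fan for existence and Rosen's diagonally strict concavity theorem for uniqueness. The first step is to use \textbf{Theorem~\ref{lower_the}} to make the follower response an implicit function of the rewards. From the KKT conditions of ASP $n$'s strictly concave program, the optimal allocations take the form $f_{nm}^{*}=\sqrt{R_{nm}\alpha_{nm}/(c_n^f+\lambda_n^f)}$ and $B_{nm}^{*}=\sqrt{R_{nm}\beta_{nm}/(c_n^B+\lambda_n^B)}$, where $\alpha_{nm}$ and $\beta_{nm}$ bundle the constants $\xi_n,\ln(1/\hat{\theta}_{nm})$ and the token-count/SNR terms, and $\lambda_n^f,\lambda_n^B\geq 0$ are the shadow prices of C2. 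Substituting back gives a closed-form reduced utility $\hat{U}_m^{mu}(\bm{R}_m;\bm{R}_{-m})$ that each MU actually optimizes over the box $[R_m^{\min},R_m^{\max}]^N$.

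For existence I would verify the hypotheses of the Debreu-Glicksberg-Fan theorem. The joint strategy space is a finite product of nonempty convex compact boxes; the envelope theorem applied to the lower-level problem shows that $\hat{U}_m^{mu}$ is continuous in $\bm{R}$; and strict concavity in MU $m$'s own strategy $\bm{R}_m$ follows because $\mathcal{Q}_{nm}^{*}$ is a concave, increasing function of $R_{nm}$ once $\bm{R}_{-m}$ (and hence the shadow prices from $m$'s vantage point) are held fixed, while $\mu_m\ln(1+\cdot)$ is a concave nondecreasing transformation and $-R_{nm}\mathcal{Q}_{nm}^{*}$ is convex in $R_{nm}$ after the KKT substitution. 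A direct Hessian computation with respect to $\bm{R}_m$ then delivers a negative-definite matrix, yielding the existence of a pure-strategy Nash equilibrium for $\psi$.

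For uniqueness I would invoke Rosen's theorem: construct the weighted sum $\sigma(\bm{R})=\sum_m r_m \hat{U}_m^{mu}$ with $r_m>0$ and verify that the pseudo-gradient is strictly monotone, i.e., that the symmetric part of its Jacobian is negative definite on the feasible set. The diagonal blocks are already negative definite by the concavity argument above, so the task reduces to controlling the off-diagonal coupling, which enters only through $\lambda_n^f(\bm{R})$ and $\lambda_n^B(\bm{R})$. Because these couplings have rank at most two per ASP and uniform weights $r_m=1$ make the aggregated off-diagonal perturbation symmetric, the Jacobian is expected to remain negative definite. The main obstacle will be bookkeeping around this implicit coupling: differentiating $\lambda_n^f,\lambda_n^B$ via the implicit function theorem and then showing that the cross-terms cannot overwhelm the diagonal—especially across the piecewise regions induced by the non-negativity constraint C1 and by whether C2 is active at a given ASP—is the genuinely delicate part of the argument.
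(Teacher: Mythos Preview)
Your proposal is sound but more elaborate than the paper's argument. The paper proceeds by a direct Hessian computation of $U_m^{mu}$ with respect to $\bm{R}_m$: it writes the Hessian as $\bm{\Lambda}_m+\bm{H}_m$ (diagonal plus off-diagonal in the $N$ ASP-coordinates), shows $\partial^2 U_m^{mu}/\partial R_{nm}^2<0$ and $\partial^2 U_m^{mu}/\partial R_{nm}\partial R_{n'm}<0$, verifies $\bm{h}^\top(\bm{\Lambda}_m+\bm{H}_m)\bm{h}<0$ for all nonzero $\bm{h}$, and then concludes both existence \emph{and} uniqueness by citing a standard game-theory reference. Your existence step---own-strategy strict concavity on a compact box, then Debreu--Glicksberg--Fan---is therefore essentially the paper's argument with the classical theorem named explicitly. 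Where you genuinely diverge is on uniqueness: you invoke Rosen's diagonal strict concavity and explicitly worry about the cross-MU coupling induced by the shadow prices $\lambda_n^f,\lambda_n^B$ in the ASPs' KKT system, whereas the paper never touches that coupling---it stops at own-strategy concavity and defers to the citation. Your route is thus the more rigorous one, at the cost of the implicit-differentiation and piecewise-regime bookkeeping you already flag as the delicate part. One small slip in your sketch: you claim $-R_{nm}\mathcal{Q}_{nm}^{*}$ is convex after substitution, but with $f_{nm}^{*},B_{nm}^{*}\propto\sqrt{R_{nm}}$ one obtains $R_{nm}\mathcal{Q}_{nm}^{*}=\kappa_n R_{nm}-C\sqrt{R_{nm}}$, which is convex, so $-R_{nm}\mathcal{Q}_{nm}^{*}$ is \emph{concave}---which is exactly what you need for concavity of $U_m^{mu}$.
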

\begin{proof}
	We define the Hessian matrix of $U^{mu}_{m}$ with respect to $\bm{R}_{m}$ as $(\bm{\Lambda}_{m}+\bm{H}_{m})$. The matrix $\bm{\Lambda}_{m}=diag \left(\frac{\partial^{2} U^{mu}_{m}}{\partial R_{1m}^{2}},\ldots,\frac{\partial^{2} U^{mu}_{m}}{\partial R_{Nm}^{2}} \right)$ and
	the second-order partial derivative matrix $\bm{H}_{m}$ is expressed by
	\begin{equation}
	\bm{H}_{m}=\begin{bmatrix}
	0 &  \frac{\partial^{2} U^{mu}_{m}}{\partial R_{1m}\partial R_{2m}}& \cdots  & \frac{\partial^{2} U^{mu}_{m}}{\partial R_{1m}\partial R_{Nm}}\\ 
	\frac{\partial^{2} U^{mu}_{m}}{\partial R_{2m}\partial R_{1m}} & 0 & \cdots  & \frac{\partial^{2} U^{mu}_{m}}{\partial R_{2m}\partial R_{Nm}}\\ 
	\vdots & \vdots  &  \ddots & \vdots \\ 
	\frac{\partial^{2} U^{mu}_{m}}{\partial R_{Nm}\partial R_{1m}}&  \frac{\partial^{2} U^{mu}_{m}}{\partial R_{Nm}\partial R_{2m}}&\cdots   & 0
	\end{bmatrix},
	\end{equation}	where 
	\begin{equation}
\begin{aligned}
\frac{\partial^{2} U^{mu}_{m}}{\partial R_{nm}^{2}}=&\ \mu_{m}\frac{\mathcal{Q}_{nm}''(1+\sum_{n=1}^N\mathcal{Q}_{nm})-(\mathcal{Q}_{nm}')^{2}}{(1+\sum_{n=1}^N\mathcal{Q}_{nm})^{2}}-2\mathcal{Q}_{nm}'\\&-R_{nm}\mathcal{Q}_{nm}''<0, \ n \in \mathcal{N},
\end{aligned}
\end{equation}
\begin{equation}
\begin{aligned}
    \frac{\partial^{2} U^{mu}_{m}}{\partial R_{nm}\partial R_{n'm}}=&-\mu_{m}\frac{\mathcal{Q}_{nm}'\mathcal{Q}_{n'm}'}{(1+\sum_{n=1}^N\mathcal{Q}_{nm})^{2}}<0,  \\ &\  \forall n \in \mathcal{N}, \ n\neq n'.
\end{aligned}
\end{equation}
	The proof of $\frac{\partial^{2} U^{mu}_{m}}{\partial R_{nm}^{2}}<0$ and $\frac{\partial^{2} U^{mu}_{m}}{\partial R_{nm}\partial R_{n'm}}<0$ are omitted due to space limits. We randomly choose a vector $\bm{h}\in \mathbb{R}^{N\times 1}$ with elements not all $0$. Then, we have $\bm{h}^{T}(\bm{\Lambda}_{m}+\bm{H_{m}})\bm{h}=\sum_{n=1}^N (h^n)^2 (\frac{\mu_m \mathcal{Q}_{nm}''}{1+\sum_{n=1}^N\mathcal{Q}_{nm}} - 2\mathcal{Q}_{nm}' - R_{nm}\mathcal{Q}_{nm}'' )-\frac{\mu_m}{(1+\sum_{n=1}^N\mathcal{Q}_{nm})^2} ( \sum_{n=1}^N h^n \mathcal{Q}_{nm}' )^2 <0$, indicating that the utility function $U_{m}^{mu}$ is strictly concave. According to~\cite{NASH}, there exists a unique Nash equilibrium in the multi-MU reward game $\psi$.   
\end{proof}
We verify the effectiveness of the incentive mechanism by proving the optimality and uniqueness of MUs' reward decisions and ASPs' resource allocation through~\textbf{Theorem~\ref{lower_the}} and \textbf{Theorem~\ref{upper_the}}.

\subsection{The Process of the QoE-driven Incentive Mechanism}
The QoE-driven incentive process dynamically adjusts the reward strategies for MUs based on the ASPs' responses. This approach effectively aligns the interests of both ASPs and MUs, facilitating the delivery of personalized AIGC services. The detailed process of this incentive mechanism is outlined in Algorithm \ref{alg:price_adjustment}.

\subsubsection{The Solution Process}
The algorithm initializes each MU 
$m$ with a reward vector $\bm{R}_m^{(0)}$ and a step size $\Delta$. At each iteration $t$, given the current rewards, each ASP $n \in \mathcal{N}$ determines its optimal computational and communication resource allocation $(\bm{f}_{n}^*, \bm{B}_{n}^*)$. Each MU $m \in \mathcal{M}$ updates its reward $R_{nm}^{(t)}$ offered to ASP $n$ by adjusting it by $\Delta_t$, while keeping the rewards $\bm{R}_{-n,m}^{(t)}$ fixed for all other ASPs. Specifically, if  increasing the reward by $\Delta_t$ maximizes the utility $U_m^{mu}(\bm{R}_m)$, the reward is updated to $R_{nm}^{(t)} + \Delta_t$; if decreasing by $\Delta_t$ yields the highest utility, it is set to
$R_{nm}^{(t)} - \Delta_t$; otherwise, the reward remains unchanged. 
The process continues
until no MU can improve its utility by unilaterally adjusting its reward, which
is verified when the total change in utilities between consecutive iterations falls
below a predefined threshold $\epsilon$.
At this point, the algorithm converges to obtain the final reward strategy $\bm{R}^*$ and the corresponding resource allocation $(\bm{f}^*, \bm{B}^*)$.

\begin{algorithm}[t]
\caption{Dual-Perturbation Reward Optimization}
\label{alg:price_adjustment}
\begin{algorithmic}[1]
\Require Personalized demands of MUs and system parameters of ASPs
\Ensure Optimal reward $\bm{R}^*$ and resource allocation $(\bm{f}^*,\bm{B}^*)$
\State \textbf{Initialize:} Rewards $\bm{R}_{m}^{(0)}$ for MU $m$, step size $\Delta$, and convergence threshold $\epsilon$.
\For{$t = 1, 2, \dots, T$}  
    \For{each ASP $n \in \mathcal{N}$}
        \State Based on the rewards provided by all MUs, each \\\hspace{3em}ASP determines its optimal computational and \\\hspace{3em}communication resource allocation $(\bm{f}_{n}^*, \bm{B}_{n}^*)$;
    \EndFor
    \For{each MU $m \in \mathcal{M}$}
        \State Store current rewards: $\bm{R}_{m}^{(t)} = \bm{R}_{m}$;  
        \State Each MU tries to increase and decrease its rewards \\\hspace{3em}with the step size \( \Delta_t \), and calculates its own utility \\\hspace{3em}based on the ASPs' optimal strategies;
        \For{each ASP $n$}
            \If{
                $U_{m}^{mu}(R^{(t)}_{nm}, \bm{R}^{(t)}_{-n,m}) \leq U_{m}^{mu}(R^{(t)}_{nm} + \Delta_t, $\\ \hspace{4.5em}$\bm{R}^{(t)}_{-n,m})$
                \text{and} $U_{m}^{mu}(R^{(t)}_{nm} - \Delta_t, \bm{R}^{(t)}_{-n,m}) \leq $\\\hspace{4.5em}$U_{m}^{mu}(R^{(t)}_{nm} + \Delta_t, \bm{R}^{(t)}_{-n,m})$
            }
                \State $ R_{nm} = \min\{ R_{nm}^{(t)} + \Delta_t, R_{m}^{\max} \}$;  
            \ElsIf{
                $U_{m}^{mu}(R^{(t)}_{nm}, \bm{R}^{(t)}_{-n,m}) \leq U_{m}^{mu}(R^{(t)}_{nm} - $\\ \hspace{4.5em}$\Delta_t,\bm{R}^{(t)}_{-n,m})$
                \text{and} $U_{m}^{mu}(R^{(t)}_{nm} + \Delta_t, \bm{R}^{(t)}_{-n,m}) \leq $\\ \hspace{4.5em}$U_{m}^{mu}(R^{(t)}_{nm} - \Delta_t, \bm{R}^{(t)}_{-n,m})$
            }
                \State $R_{nm} = \max\{R_{m}^{\min}, R_{nm}^{(t)} - \Delta_t\}$;  
            \Else
                \State \( R_{nm} = R_{nm}^{(t)} \)  
            \EndIf
        \EndFor
    \EndFor
    \If{ \( \sum_{m=1}^{M} | U_{m}^{mu}(\bm{R}_{m}^{(t)}) - U_{m}^{mu}(\bm{R}_{m}^{(t-1)}) | \leq \epsilon \) }
        \State \( \mathbf{R}^* = \mathbf{R} \) 
        \State \( (\mathbf{f}^*, \mathbf{B}^*) = \texttt{ASP\_response}(\mathbf{R}^*) \)
        \State \textbf{break}  
    \EndIf
\EndFor
\end{algorithmic}
\end{algorithm}

\subsubsection{Execution Flow}
 At the beginning of each update cycle, MUs report their personalized demands, and ASPs provide current resource and network conditions. Based on this information, the central decision agent runs Algorithm~1 to determine optimal rewards and resource allocations. This framework integrates the real-time sensing capability of distributed nodes with centralized control, ensuring both efficiency and consistency in decision-making.
In contrast, fully distributed methods require frequent strategy exchanges among nodes during each iteration, resulting in significant communication overhead. Similarly, gradient-based distributed optimization techniques rely on each MU to accurately estimate or jointly compute the partial derivatives of its utility function, which introduces additional computational and coordination complexity.
Algorithm 1 leverages a zeroth-order perturbation mechanism~\cite{7728084} to streamline this process. In each iteration, the central decision agent perturbs each MU's reward slightly in both directions and observes the resulting utility changes. Based on this directional feedback, it simulates how each MU would react and updates the rewards accordingly. This
approach eliminates the need for gradient computation and
extensive message passing, thereby reducing overall system
overhead.

\subsubsection{Convergence Analysis}
The convergence complexity of Algorithm 1 depends on the step-size strategy~\cite{6021434}, as detailed in Appendix A. With a diminishing step size $\Delta_{t}=\frac{u}{t}$, Algorithm 1 converges to an $\epsilon$-Nash equilibrium at a rate of $T_\epsilon = \mathcal{O}(\exp(\frac{\ddot{\kappa}}{\epsilon}))$, where $\ddot{\kappa}=  (\frac{N}{u}\sum_{m=1}^M(R_{m}^{\max})^2 +  MN\frac{u\pi^2}{6})$. The analysis indicates that finding the NE in multi-MU games is typically challenging, even with a small number of MUs~\cite{RePEc}.
For a fixed number of MUs and ASPs, the convergence speed is primarily influenced by the initial reward vector $\mathbf{R}_m^{(0)}$  for each MU $m \in \mathcal{M}$ and the step size parameter $u$. If $\|\mathbf{R}_m^{(0)} - \mathbf{R}_m^*\|_2 \approx u$, rapid convergence to the equilibrium is achieved; however, suboptimal choices of $\mathbf{R}_m^{(0)}$ or $u$ may significantly slow convergence.  
Due to the determinism of the update rule, given an initialization and fixed parameters, each iteration $t$ produces a unique reward vector $\mathbf{R}_m^{(t)}$, and thus the (initialization-dependent) iterative trajectory is unique and reproducible. In contrast,
using a constant step size $\Delta_{t}=\hat{\Delta}$, the algorithm  approaches a neighborhood of the NE at a rate of $\mathcal{O}(\frac{{\kappa}'}{\epsilon'})$, where 
   $\epsilon' =  \epsilon - \frac{\hat{\Delta} MN}{2}$ and $\kappa' = \frac{N \sum_{m=1}^M (R_m^{\max})^2}{\hat{\Delta}}$.
This setting achieves faster, polynomial convergence in $1/\epsilon'$, but only ensures proximity
to the equilibrium within a fixed neighborhood. In practice, diminishing step
sizes are preferred when high solution accuracy is critical and longer convergence times are acceptable, whereas constant step sizes are more suitable when
a slight accuracy loss is tolerable (e.g., due to quantized decision variables) and
rapid convergence is prioritized.

\begin{figure*}[t]
	\centering
	\subfigure[Reward decision and QoE.] {
		\label{QoE_theta}
		\includegraphics[scale=0.38]{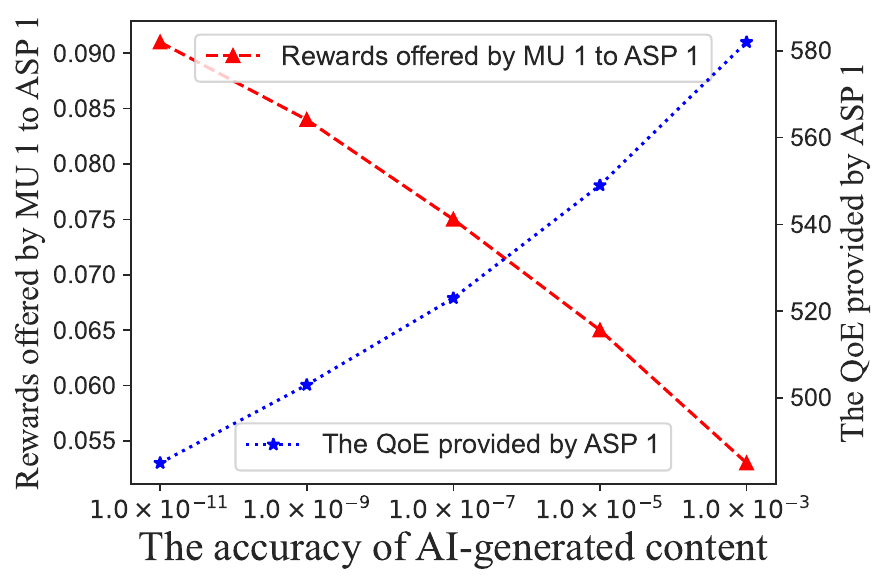}
	}
	\subfigure[Resource allocation.] {
		\label{resource_theta}
		\includegraphics[scale=0.38]{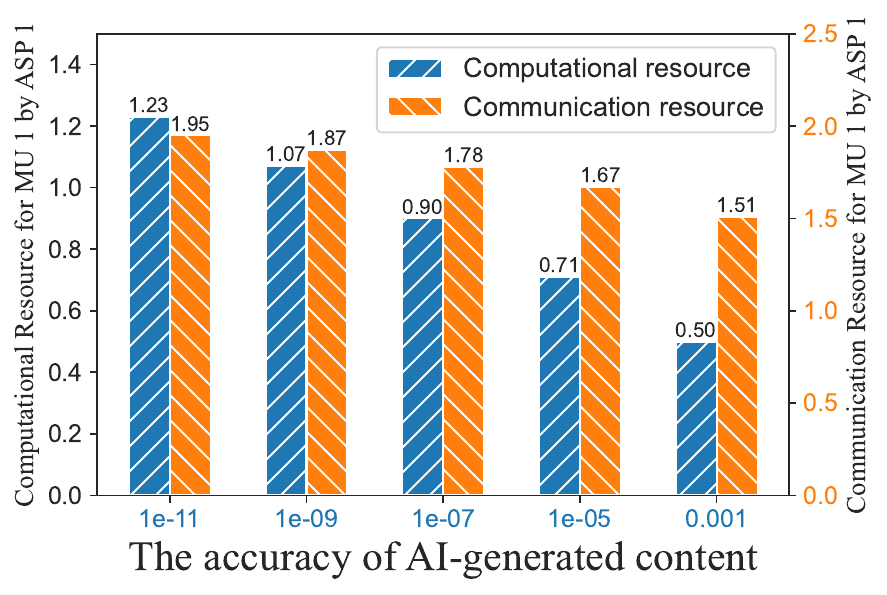}
	}
	\subfigure[Utilities of ASP 1 and MU 1.] {
		\label{ASPs_theta}
		\includegraphics[scale=0.38]{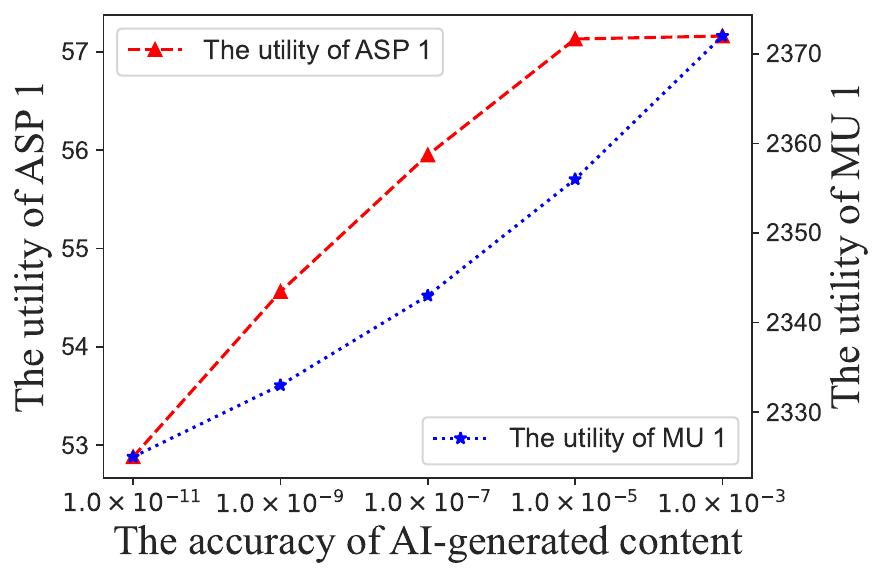}
	}
	\subfigure[Reward decision and QoE.] {
		\label{QoE_x}
		\includegraphics[scale=0.38]{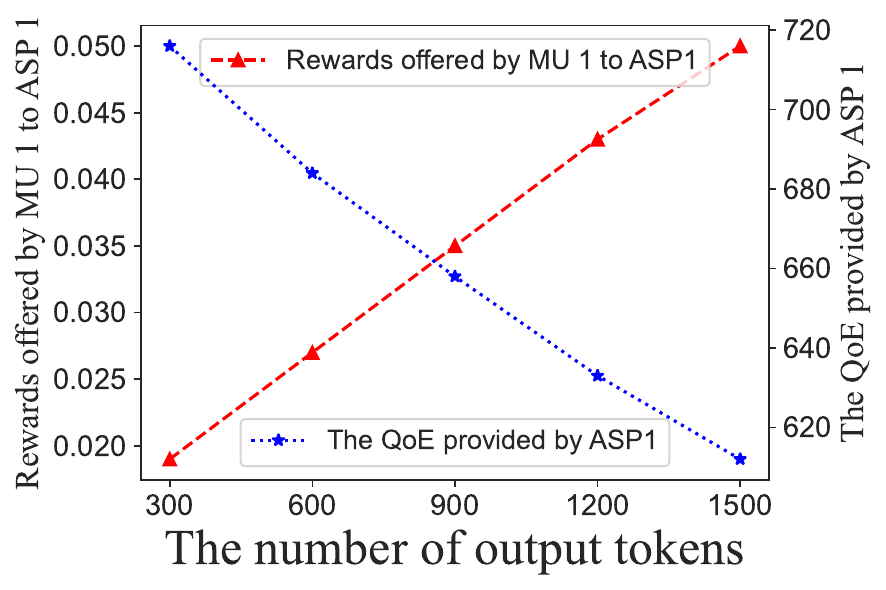}
	}
	\subfigure[Resource allocation.] {
		\label{resource_x}
		\includegraphics[scale=0.38]{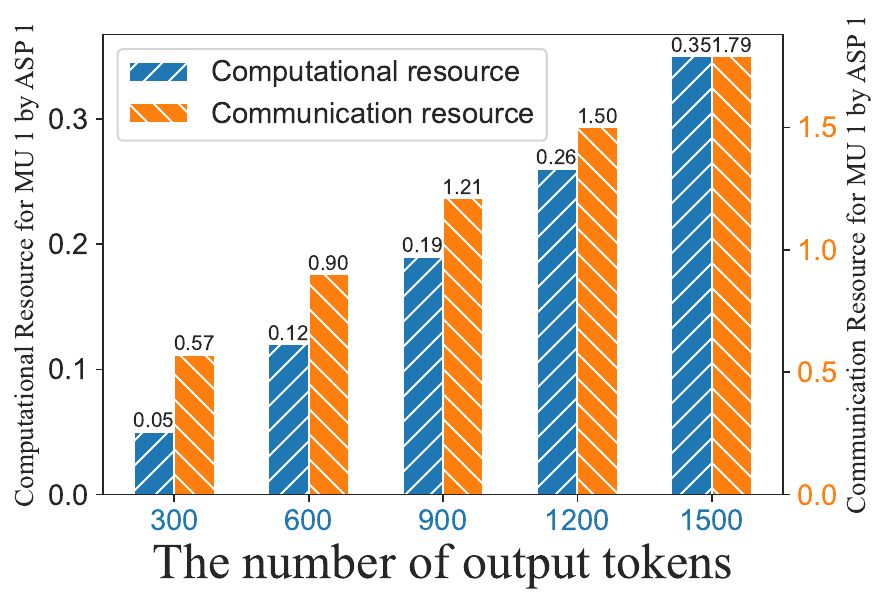}
	}
	\subfigure[Utilities of ASP 1 and MU 1.] {
		\label{ASPs_x}
		\includegraphics[scale=0.38]{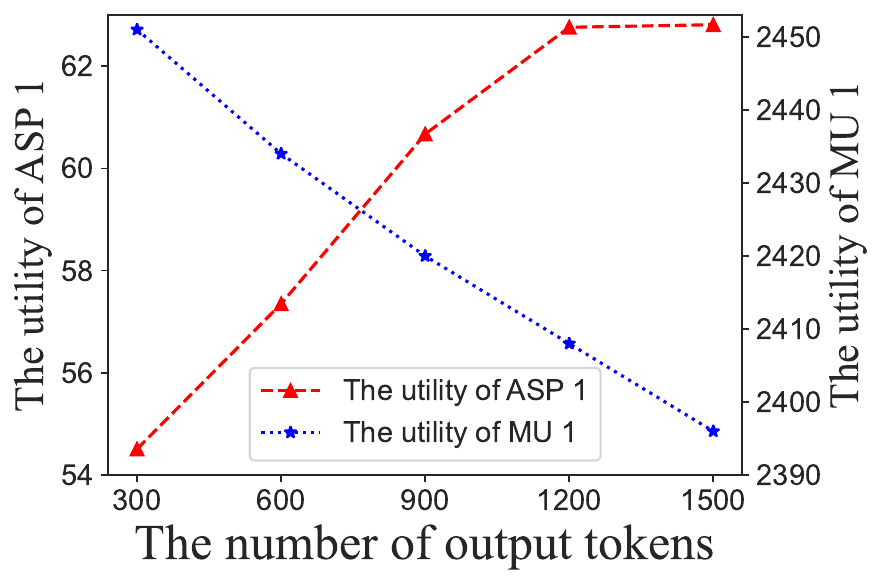}
	}
	\subfigure[Reward decision and QoE.] {
		\label{QoE_k}
		\includegraphics[scale=0.38]{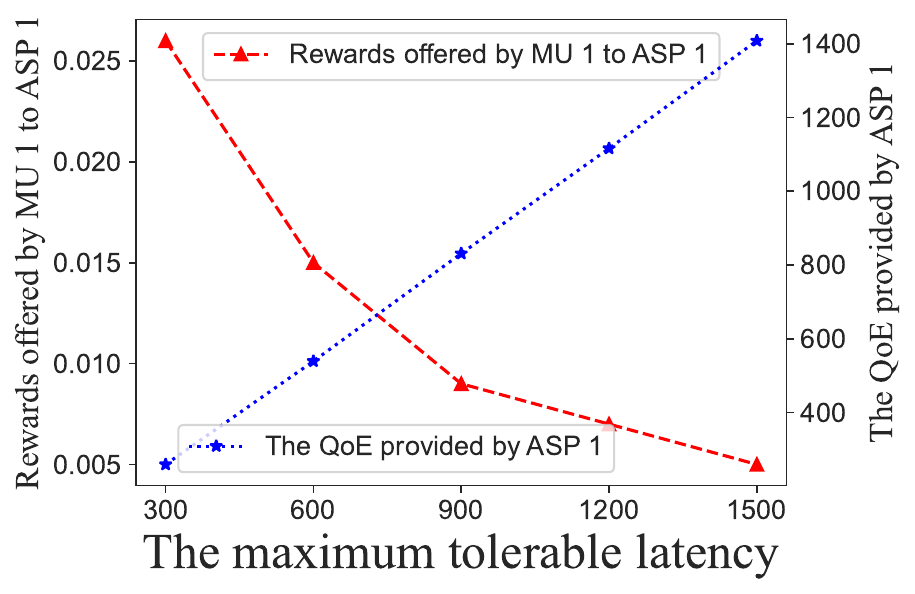}
	}
	\subfigure[Resource allocation.] {
		\label{resource_k}
		\includegraphics[scale=0.37]{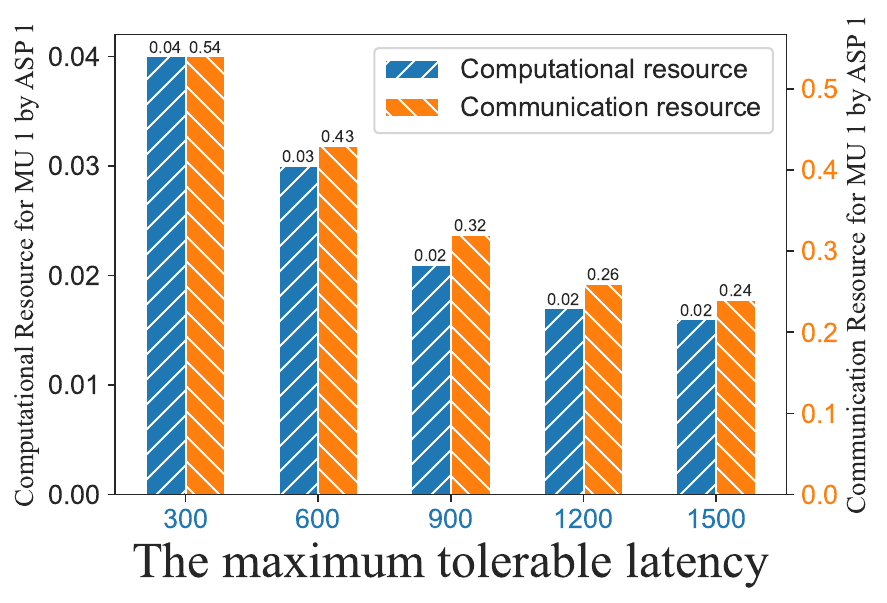}
	}
	\subfigure[Utilities of ASP 1 and MU 1.] {
		\label{ASPs_k}
		\includegraphics[scale=0.38]{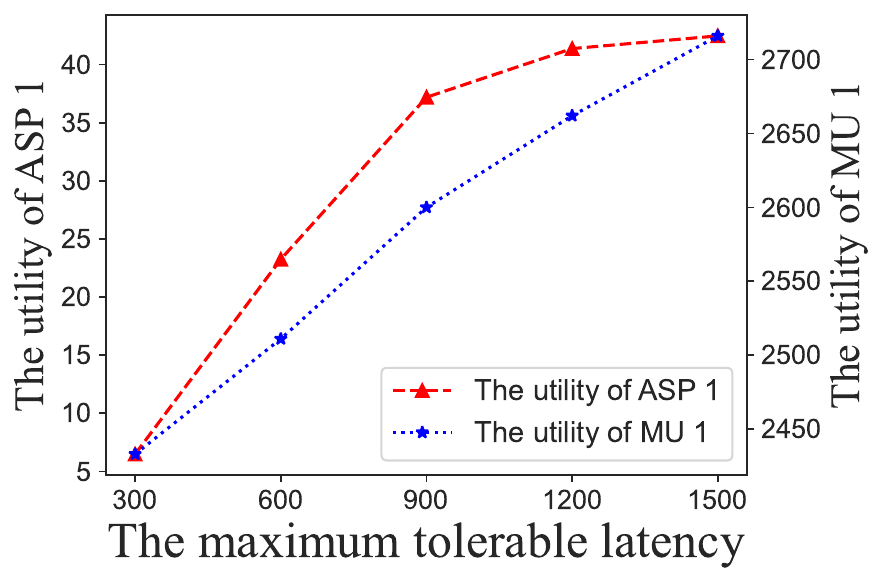}
	}
	
	\caption{The impact of three parameters on the decisions of MU 1 and ASP 1.}
	\label{combined_figures}
\end{figure*}

\begin{figure}[t]
	\centering
	\subfigure[The average utility versus number of MUs.] {
		\label{fig:MUvary_utilities}
		\includegraphics[width=1\linewidth]{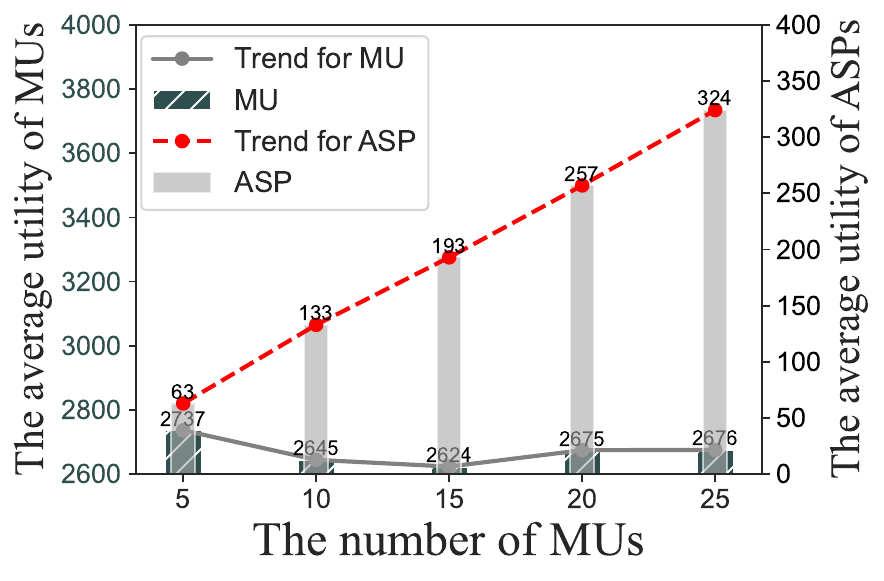}
	}
	\subfigure[The average  utility versus number of ASPs.] {
		\label{fig:ASPvary_utilities}
		\includegraphics[width=1\linewidth]{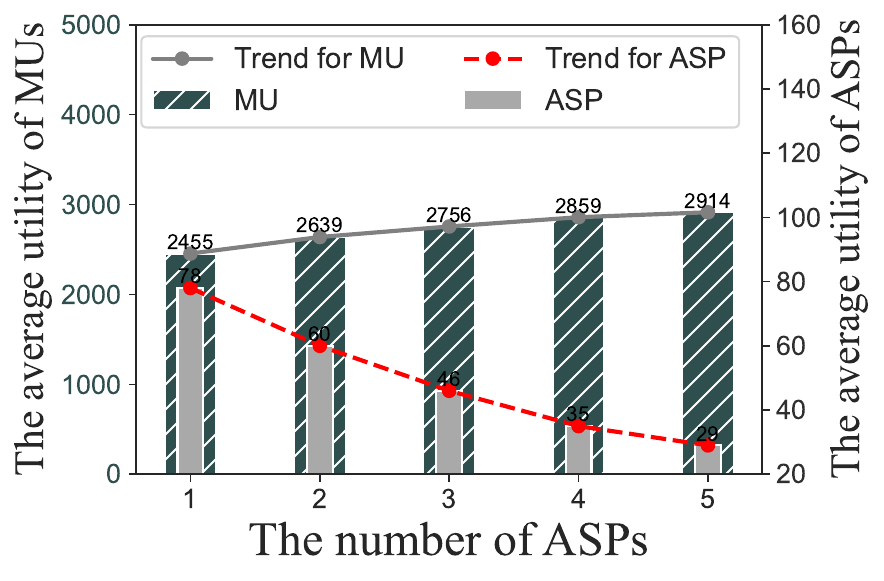}
	}
	\caption{Impacts of the number of MUs/ASPs on utilities.}
\end{figure}

\section{Simulation Results}\label{exp}
 In this section, we first show the impact of different demands on incentives through QoE quantitative analysis and a case study. Then, we demonstrate the effectiveness and advantages of the proposed incentive through comparative experiments with existing schemes.

\subsection{Analysis of Critical Factors}
To investigate the incentive process, we adopt a quantitative approach to examine how varying the personalized demands of MU 1, while keeping those of other MUs fixed, influences the trading outcomes.
We precisely manipulate three critical parameters associated with $\mathcal{Q}_{11}$: the accuracy $\hat{\theta}_{11}$ of AIGC\footnote{The ASP derives the value of $K$ based on the MUs' accuracy requirements through $\hat{\theta}_{nm}= \beta_{nm}(\frac{\eta}{1-\eta})^{K}$, thereby guiding the inference service to meet the demands of MUs.}, with values $[10^{-11}, 10^{-9}, 10^{-7},10^{-5},10^{-3}]$ depending on $K=[10,8,6,4,2]$; the number of output tokens $x_{11}$, varying across $[300, 600, 900, 1200, 1500]$; and the maximum tolerable latency $\kappa_{11}$, ranging from $[300,600,900,1200,1500]$ms. Furthermore, we analyze the average utility of  MUs and ASPs by adjusting the number of MUs and ASPs. By systematically varying these parameters, we aim to investigate their effects on the decision-making process comprehensively. The main experimental parameters and values are summarized in Table~\ref{tab:experiment_settings}.

\subsubsection{Accuracy of AIGC $\hat{\theta}$}
As the value of $\hat{\theta}_{11}$ increases (indicating lower accuracy requirements), the reward $R_{11}$ per unit of $\mathcal{Q}_{11}$ offered by MU 1 to ASP 1 decreases (Fig.~\ref{QoE_theta}). This reduction in $R_{11}$ stems from MU 1's rational response to relaxed accuracy constraints, which lowers the marginal value of additional computational efforts (Fig.~\ref{resource_theta}), thus enabling MU 1 to enjoy the service at a lower cost. However, the QoE shows an upward trend as $\hat{\theta}_{11}$ increases. This upward trend in QoE can be attributed to its definition: a higher QoE value indicates the ability to meet the requirements of MUs more quickly. Notably, as  accuracy requirements become easier to satisfy, the QoE value rises accordingly. Furthermore, MU 1 experiences an increase in utility (Fig.~\ref{ASPs_theta}). This is because the relaxed accuracy requirements decrease the total incentive rewards $R_{11}\mathcal{Q}_{11}$ for ASP 1, allowing the service to be obtained at a lower cost. Meanwhile, ASP 1 adapts its resource allocation to optimize its utility, ensuring that reduced total incentive rewards do not result in a loss of utility.
These findings show that relaxing AIGC accuracy requirements reduces computational demand and incentive costs while improving QoE, thereby validating the dynamic trade-off between quality and cost.

\begin{table}[t]
\centering
\caption{Experimental Parameter Settings.}
\label{tab:experiment_settings}
\begin{tabular}{lc}
\toprule
\textbf{Parameters} & \textbf{Values} \\
\midrule
Number of ASPs            & [1,2,3,4,5]                       \\
Number of MUs           &[5,10,15,20,25] \\
$K$ (CoT Examples)        & [2, 4, 6, 8, 10]        \\
Input/Output Token ($x^{\text{in}}$/$x^{\text{out}}$)           & 100--2000 tokens        \\
Maximum Computational Resource ($f^{\max}$) & 5-30 TFLOPS \\
Maximum Bandwidth ($B^{\max}$) & 100-500 MHz      \\
Maximum Tolerable Latency ($\kappa$)   & 300-1500 ms       \\
SNR        & 10-30 dB                 \\
\bottomrule
\end{tabular}
\end{table}

\subsubsection{Number of Output Tokens $x^{\text{out}}$}
 As the required number of output tokens increases,  the reward $R_{11}$ per unit of $\mathcal{Q}_{11}$ provided by MU 1 to ASP 1 tends to rise (Fig.~\ref{QoE_x}). This is because each additional output token requires processing more tokens due to the autoregressive nature of the model, where each output depends on all previous inputs. To meet MU 1's increasing demands, more computational and communication resources are allocated to MU 1 (Fig.~\ref{resource_x}).
 Furthermore, MU 1 experiences a decrease in utility, whereas ASP 1's utility increases with the rise in output tokens (Fig.~\ref{ASPs_x}). This trend indicates that as the demand for $x^{\text{out}}$ grows, MU 1 compensates ASP 1 with higher total incentive rewards, leading to increased costs and reduced utility. However, ASP 1 strategically optimizes its resource allocation to efficiently meet the personalized demands of MU 1, maximizing its utility within the constraints of limited resources and latency thresholds. Ultimately, these findings underscore that higher token demand increases resource consumption, making it the primary driver of cost growth while enhancing the profitability of ASPs.

\subsubsection{Maximum Tolerable Latency $\kappa$}
As the value of $\kappa$ increases,  the reward $R_{11}$ per unit of $\mathcal{Q}_{11}$ provided by MU 1 to ASP 1 gradually decreases (Fig.~\ref{QoE_k}). This occurs because the increase in $\kappa$ makes MU 1's demand less urgent, allowing ASP 1 to meet MU 1's requirements with fewer resources (Fig.~\ref{resource_k}), thereby lowering the cost per unit of QoE. 
Moreover, both the utilities of MU 1 and ASP 1 show an upward trend (Fig.~\ref{ASPs_k}).  For MU 1, lower service requirements naturally result in cost savings and higher utility. For ASP 1, a larger $\kappa$ facilitates more cost-efficient service provision, thereby enhancing its utility.
Overall, the increase in $\kappa$ allows ASPs to reduce resource consumption and increase benefits. However, to ensure fairness and consistency across tasks of the same type, the $\kappa$ value should be uniformly set by ASPs. A reasonable relaxation of latency constraints can lower costs while ensuring service quality, ultimately yielding mutual benefits for both MUs and ASPs.

\subsubsection{Number of MUs and ASPs}
The increasing trends observed in Fig.~\ref{fig:MUvary_utilities} for ASPs are attributed to continued market expansion. With the increase in MUs, the demand for their services expands, thereby driving the profitability of ASPs. This growth reflects the positive correlation between the number of MUs and the average utility of ASPs.  In contrast, the stability of the average MU utility (fluctuation range $<$ $5\%$) stems from the emergence of a Nash equilibrium in the non-cooperative game. When new MUs enter the system, existing MUs strategically adjust their reward strategies to respond to competition and maximize their utilities.
As the number of ASPs increases, the average utility of MUs gradually rises, while the average utility of ASPs decreases (Fig.~\ref{fig:ASPvary_utilities}).
This is because when new ASPs are added, MUs reduce the total incentive rewards for each ASP to lower costs, thereby increasing their utilities.
However, this reward reduction directly affects the ASP's revenue, leading to a decline in its average utility. Furthermore, MUs' benefits are based on the QoE evaluation of all tasks, making MUs more concerned with the overall benefits of all tasks rather than an individual task. 
The above findings indicate that, as the market size expands, the average utility of MUs remains relatively stable, while the benefits for ASPs are influenced by the number of MUs and ASPs.

\begin{table}[t]
\caption{Personalized Demands of MUs.}
\centering
\label{table1}
\begin{threeparttable}
\begin{tabular}{cclllc}\\
\toprule
\multicolumn{1}{l}{} & \multicolumn{4}{c}{AIGC Service Provider 1}   & \multicolumn{1}{l}{AIGC Service Provider 2} \\ \midrule 
MU 1                 & \multicolumn{4}{c}{$(\theta= 1e^{-7}, x^{\text{out}}=200)$} & $(\theta=1e^{-5},x^{\text{out}}=1400)$                    \\
MU 2                 & \multicolumn{4}{c}{$(\theta=1e^{-9}, x^{\text{out}}=500)$} & $(\theta= 1e^{-7},x^{\text{out}}=1200)$                    \\
MU 3                 & \multicolumn{4}{c}{$(\theta=1e^{-8}, x^{\text{out}}=800)$} & $(\theta= 1e^{-8},x^{\text{out}}=1000)$                    \\ \bottomrule
\end{tabular}
\begin{tablenotes}
\item {Note: The maximum tolerable latency of ASP 1 and ASP 2 are $500$ms and $1000$ms, respectively.}
\end{tablenotes}
\end{threeparttable}
\end{table}
\begin{table*}[t]
\caption{Results of the QoE-Driven Incentive Mechanism: \\ Reward Decisions for MUs and Resource Allocation Decisions for ASPs.}
 \centering
  \label{table2}
\begin{threeparttable}  
\begin{tabular}{ccccccccc}
\toprule
    & \multicolumn{4}{c}{AIGC Service Provider 1}                                                                                                         & \multicolumn{4}{c}{AIGC Service Provider 2}                                                                                                         \\ \cmidrule(lr){2-5} \cmidrule(lr){6-9}
  & \begin{tabular}[c]{@{}c@{}}computational\\  resource\end{tabular} & \begin{tabular}[c]{@{}c@{}}communication\\ resource\end{tabular} & rewards & QoE  & \begin{tabular}[c]{@{}c@{}}computational\\  resource\end{tabular} & \begin{tabular}[c]{@{}c@{}}communication\\ resource\end{tabular} & rewards & QoE  \\ \midrule
MU 1 & 0.4                                                              & 1.58                                                             & 0.06    & 353 & 0.62                                                             & 1.73                                                             & 0.056   & 730 \\
MU 2 & 0.93                                                             & 2.13                                                             & 0.1     & 298 & 1.23                                                             & 2.71                                                             & 0.086   & 643 \\
MU 3 & 1.44                                                             & 2.82                                                             & 0.126   & 255 & 0.99                                                             & 2.26                                                             & 0.076   & 674 \\ \bottomrule
\end{tabular}
\begin{tablenotes}
\item 
{Note: The units of computational and communication resources are TFLOPs and MHz, respectively.}
\end{tablenotes}
\end{threeparttable}
\end{table*}

\begin{figure*}[t]
	\centering
	\subfigure[Computational overhead.] {
		\label{utif_MU}
		\includegraphics[scale=0.42]{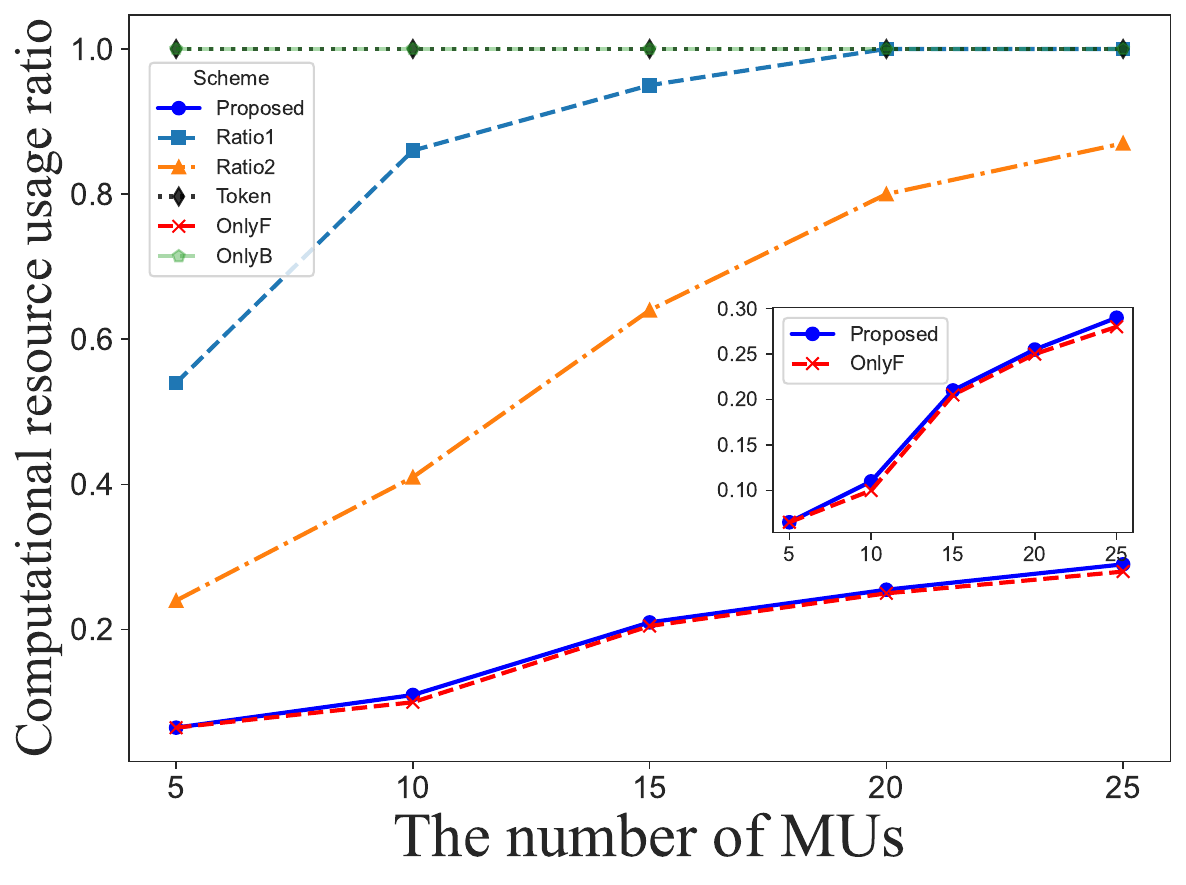}
	}
	\subfigure[Communication overhead.] {
		\label{utiB_MU}
		\includegraphics[scale=0.42]{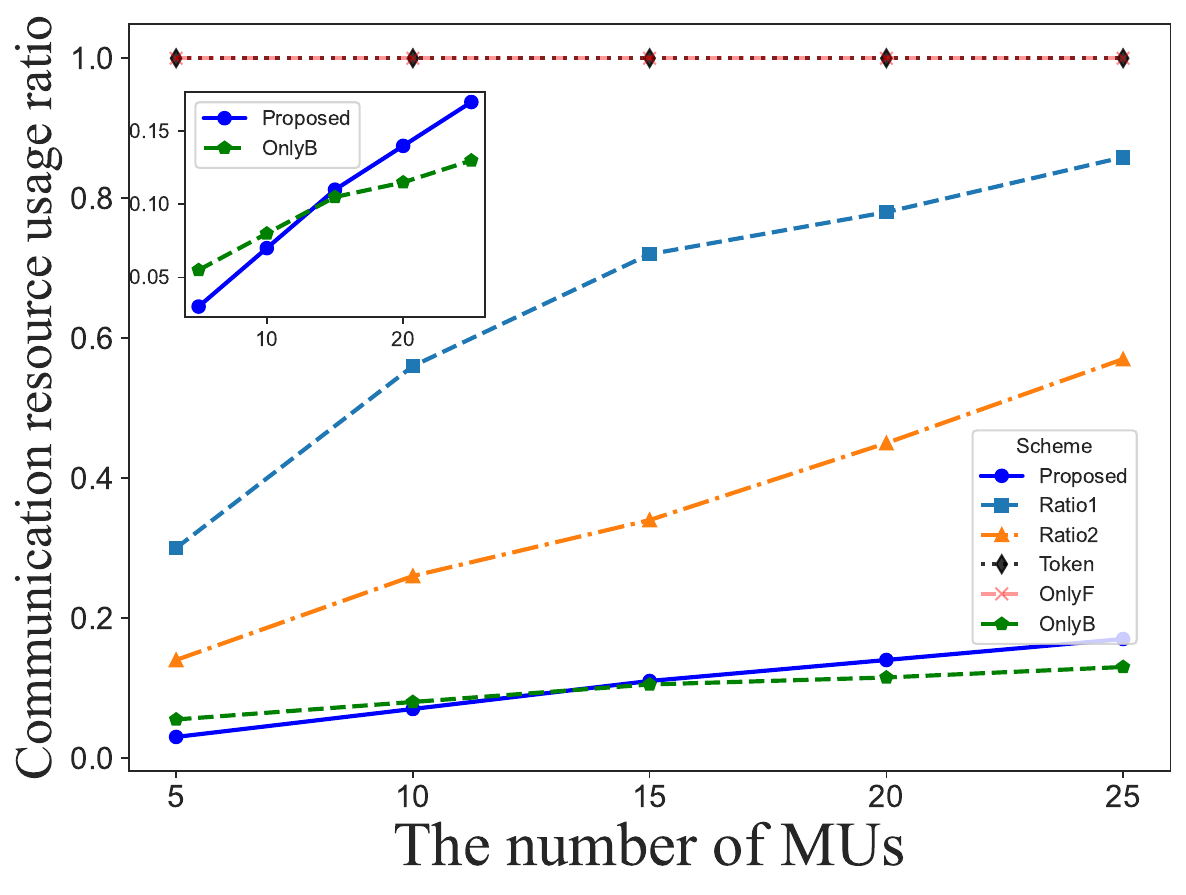}
	}
    	\subfigure[The average cost of MUs.] {
		\label{MUcost_MU}
		\includegraphics[scale=0.42]{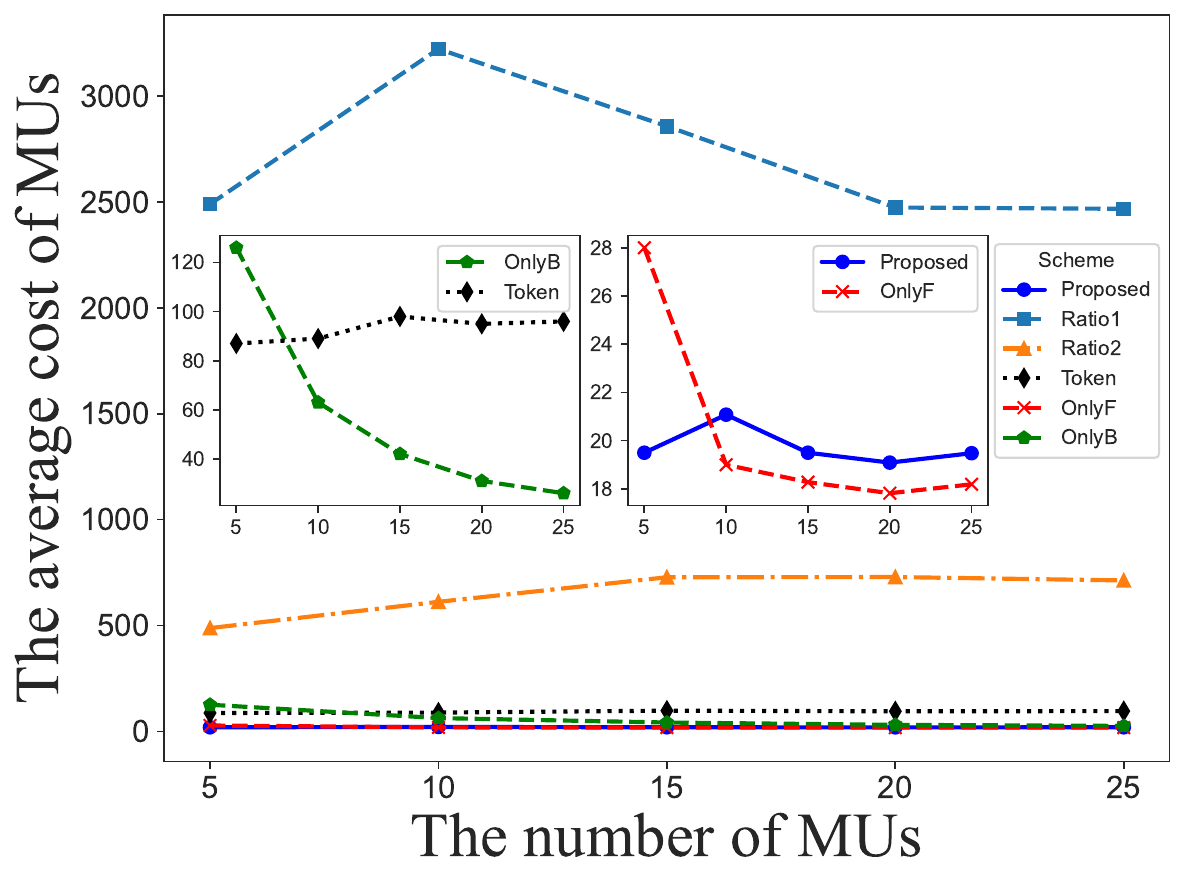}
	}
    \subfigure[The average cost of ASPs.] {
		\label{ASPcost_MU}
		\includegraphics[scale=0.42]{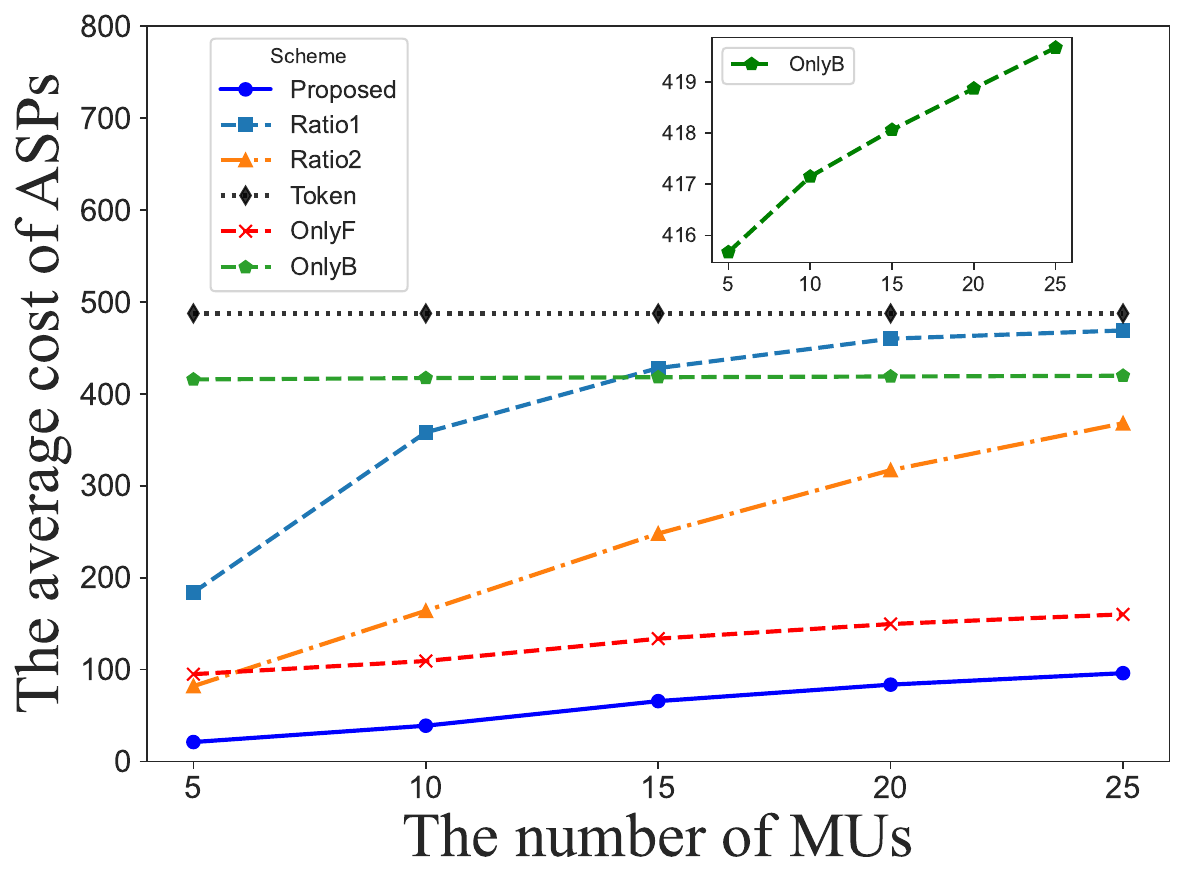}
	}
	\caption{The impact of the number of MUs on different schemes.}
	\label{compare_MU}
\end{figure*}

\begin{figure*}[t]
	\centering
	\subfigure[Computational overhead.] {
		\label{utif_ASP}
		\includegraphics[scale=0.42]{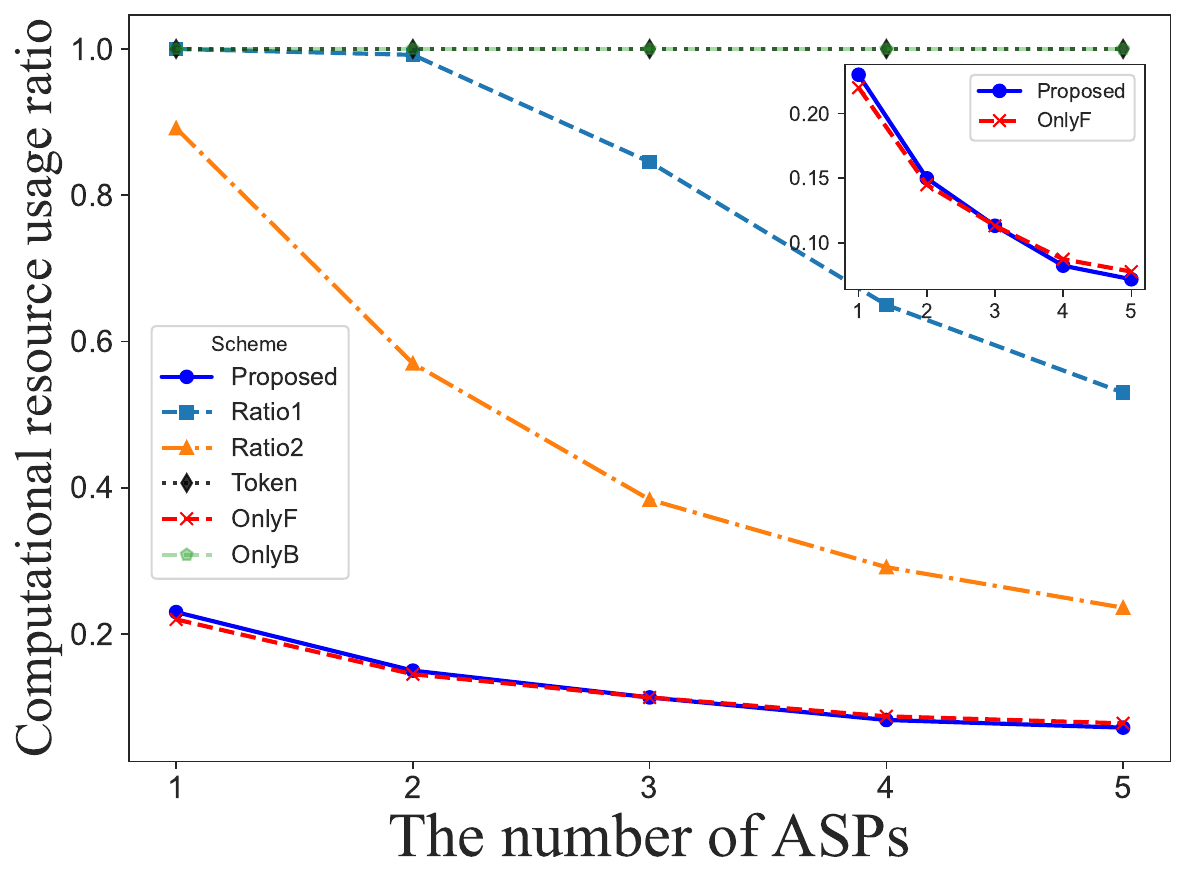}
	}
	\subfigure[Communication overhead.] {
		\label{utiB_ASP}
		\includegraphics[scale=0.42]{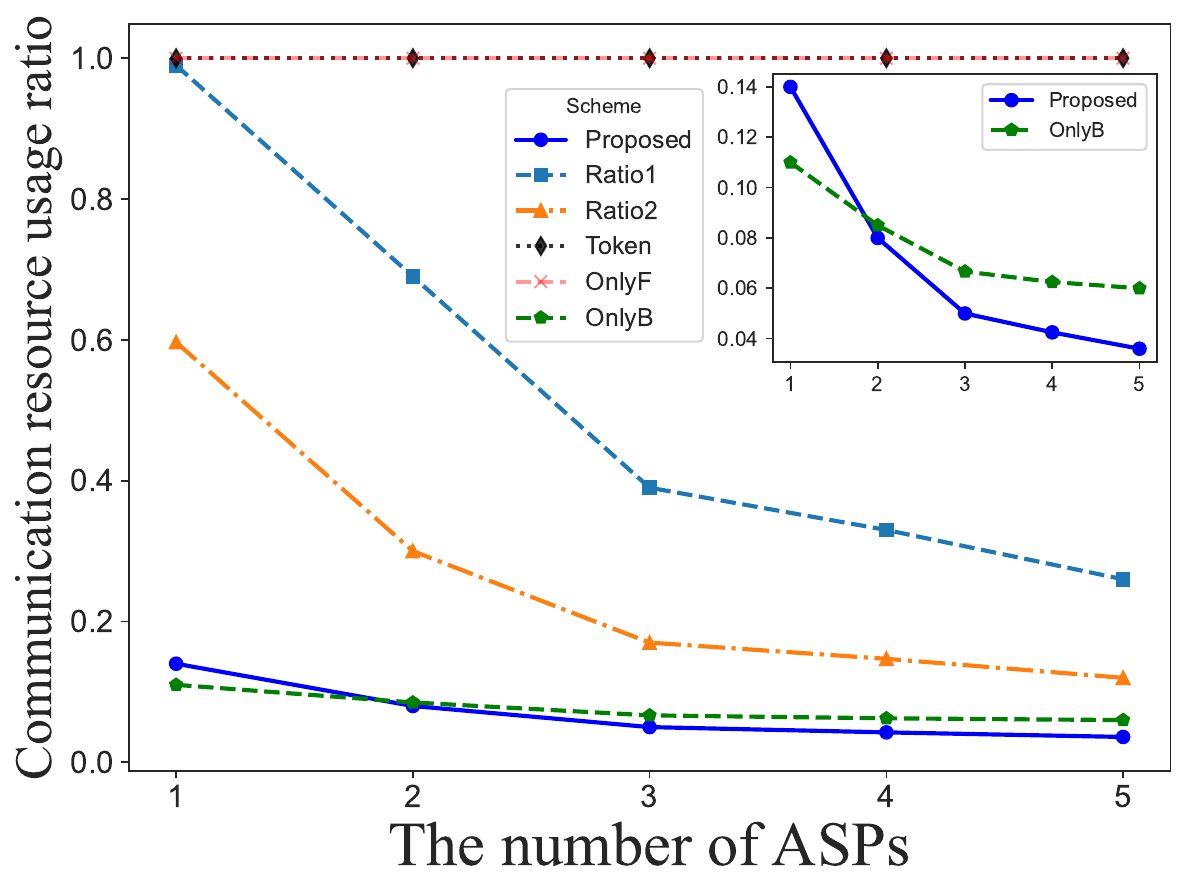}
	}
    	\subfigure[The average cost of MUs.] {
		\label{MUcost_ASP}
		\includegraphics[scale=0.42]{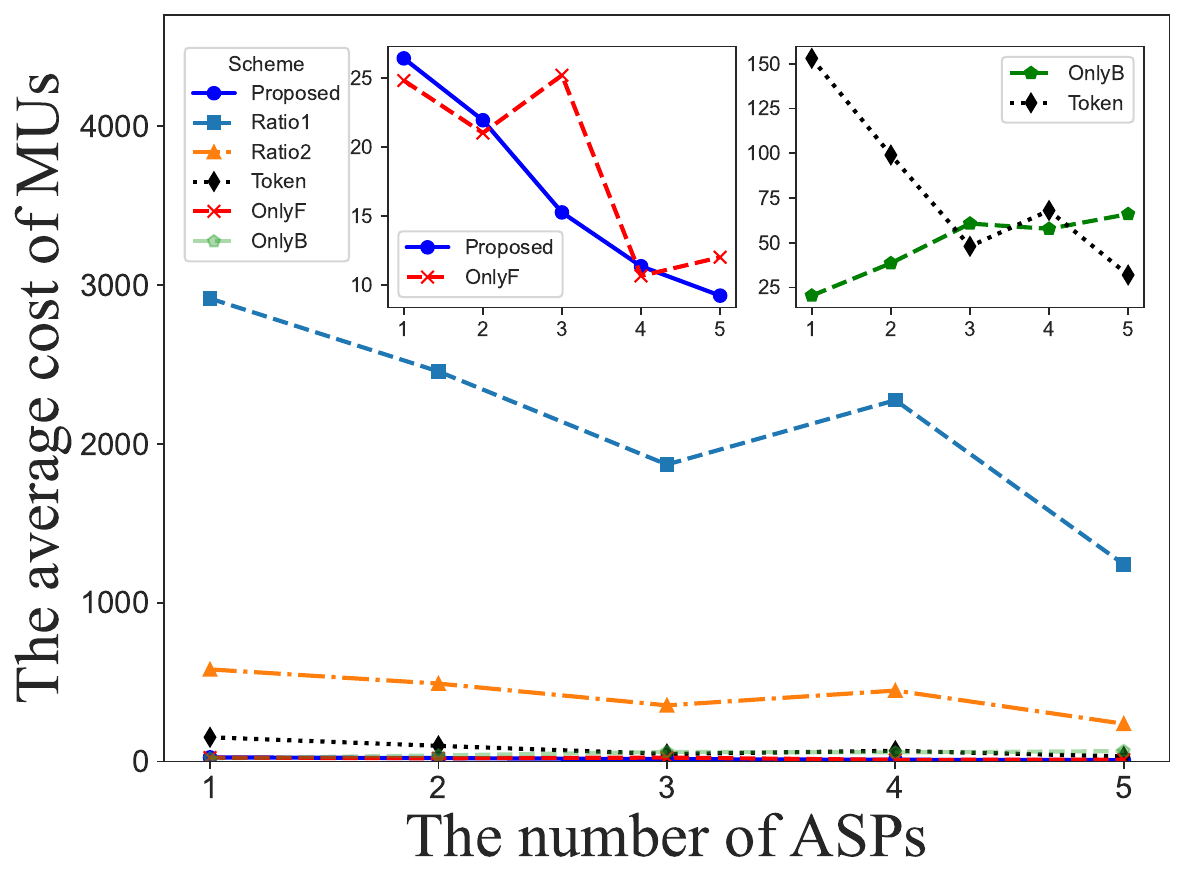}}
\subfigure[The average cost of ASPs.] {
		\label{ASPcost_ASP}
		\includegraphics[scale=0.42]{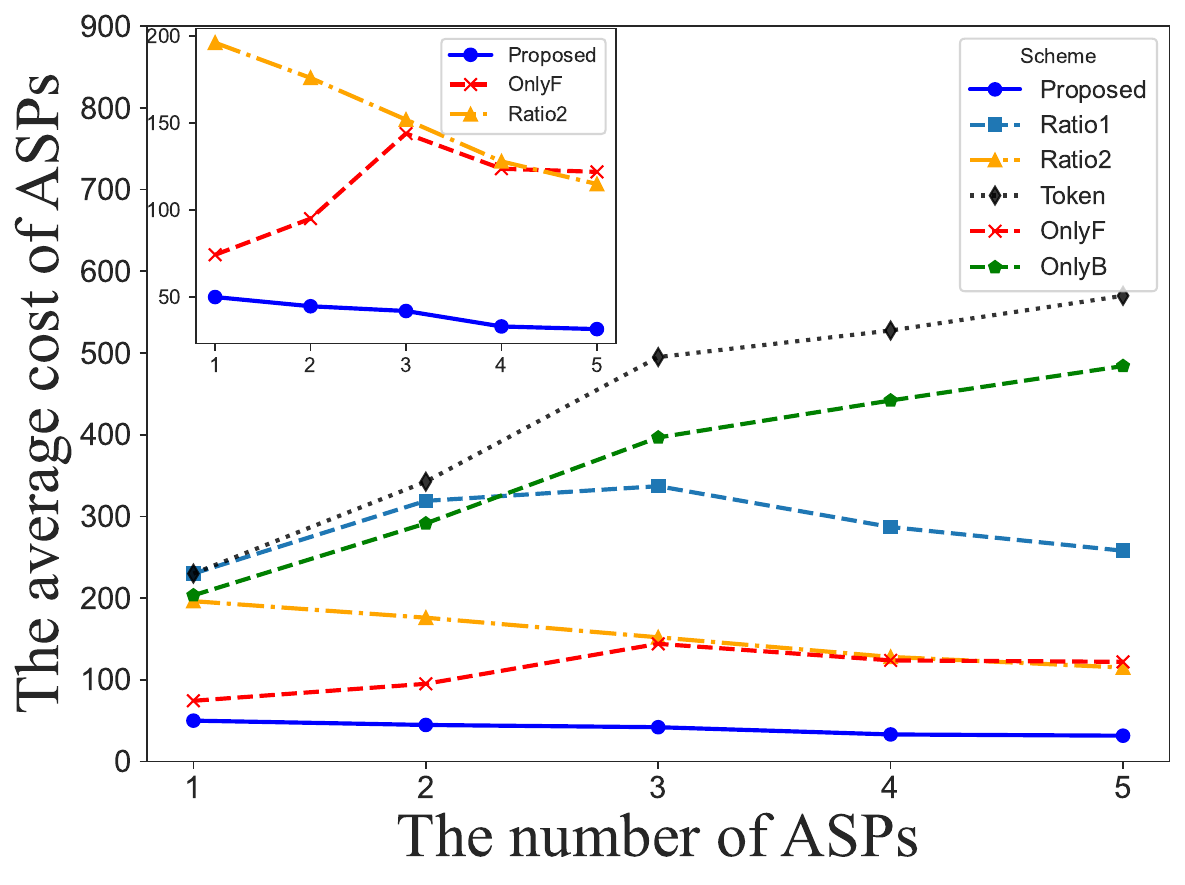}
	}
	\caption{The impact of the number of ASPs on different schemes.}
	\label{compare_ASP}
\end{figure*}

\subsection{Case of Two ASPs and Three MUs}
Based on a set of predefined personalized demands, the QoE-driven incentive mechanism optimally determines rewards for each MU and allocates resources for each ASP. This study highlights the varied AIGC service demands of three distinct MUs, including voice-guided navigation for autonomous driving and academic writing services. As shown in Table~\ref{table1}, MUs prioritize accuracy and concise responses to enhance safety and overall experience in autonomous driving. Conversely, academic writing services focus on the richness and granularity of AIGC.

Table~\ref{table2} presents the results obtained through the QoE-driven incentive mechanism. The two ASPs adopt different resource allocation strategies based on the specific service demands of the MUs.
For example, MU 3 has high demands for both accuracy and the number of output tokens from ASP 1, leading to the highest rewards allocated to ASP 1. In response, ASP 1 allocates more computational and communication resources to meet MU 3's service requests. However, despite the higher reward, the QoE for MU 3 is the lowest.
This is because meeting MU 3's more stringent requirements requires more resources, increasing the total cost and pushing the completion time closer to the maximum tolerable limit. Under our QoE formulation, a lower value directly corresponds to a reduced latency margin, signifying a poorer quality of experience in terms of service timeliness.

The case study illustrates the practical value of the incentive mechanism by effectively addressing the personalized demands of the MUs. This not only alleviates the resource allocation challenges faced by ASPs but also enhances MUs' satisfaction, thereby improving the overall operational efficiency and competitiveness of the service ecosystem.

\subsection{Comparative Analysis}
We conduct a comparative analysis with four incentive design schemes, which can be divided into two categories: (1) joint resource optimization schemes under different reward mechanisms, and (2) different resource allocation schemes under the same reward mechanism. This comparison demonstrates that our proposed scheme effectively reduces service access costs for MUs while minimizing resource consumption of ASPs. The four schemes are summarized as follows:
\begin{itemize}{
	\item {\textbf{\textit{Ratio-based (fixed total reward)}}: Applying the ratio idea from \cite{liao2025reward}, MUs decide the reward of unit QoE proportionally based on a fixed total price $R_{total}$. The total price can be flexibly decided, e.g., $R_{total}=5$ or $R_{total}=1$. ASPs jointly optimize the computational and communication resource allocation based on the rewards and resource constraints to satisfy the personalized demands of MUs.}
    
    	\item {\textbf{\textit{Token-based}}: MUs dynamically adjust rewards based on the number of input and output tokens, similar to the pricing model used by ChatGPT\footnote{https://openai.com/api/pricing/}. ASPs jointly optimize resource allocation based on the rewards and token consumption by MUs.}
        
		\item {\textbf{\textit{OnlyF}}: Inspired by the optimization ideas of \cite{unified,optimal-f1}, each MU determines its rewards based on the expected QoE. ASPs optimize computational resource allocation, while communication resources are equally distributed among MUs.}
      
		\item {\textbf{\textit{OnlyB}}: Similar to \textit{OnlyF}, the reward in \textit{OnlyB} is based on QoE. However, ASPs focus on optimizing communication resource allocation~\cite{optimal_B2,optimal_B1}, while computational resources are equally divided among MUs.}}
\end{itemize}

We evaluate the \textit{proposed} incentive mechanism against the above schemes across resource usage ratio and average cost of MUs (ASPs). The results demonstrate that the \textit{proposed} scheme achieves superior efficiency, scalability, and cost-effectiveness through a QoE-driven incentive mechanism and joint resource optimization.

As shown in Figs.~\ref{utif_MU} and~\ref{utiB_MU}, the \textit{proposed} scheme reduces average resource usage ratio by approximately $75\%$,  $68\%$ and $77\%$ compared to \textit{Ratio1} ($R_{total}=5$), \textit{Ratio2} ($R_{total}=1$) and \textit{Token} schemes. This improvement is reflected in the average cost of the ASPs, as illustrated in Fig.~\ref{ASPcost_MU}, where the \textit{proposed} consistently maintains the lowest cost. Furthermore, the non-cooperative game between MUs leads to a stable utility, with the \textit{proposed}'s average MU cost fluctuating narrowly between $19.09$ and $21.08$ (Fig.~\ref{MUcost_MU}), avoiding the extreme volatility seen in \textit{Ratio1} ($2,468-3,223$) and \textit{Ratio2} ($488-729$). These results stem from the ability of the \textit{proposed} mechanism to dynamically adjust rewards based on QoE, enabling cost control in the competition. In contrast, Ratio-based schemes have high unit reward due to the high $R_{total}$ setting, forcing ASPs to over-provision resources (e.g., \textit{Ratio1}'s computational resource usage ratio is $1.0$ for $20+$ MUs). The \textit{Token} scheme determines the unit reward based solely on the total number of input and output tokens without considering the personalized demands of MUs. Such a model leads to resource wastage, especially when demand for certain resources is lower than expected, and it fails to adapt to dynamic load changes.

For both \textit{OnlyF} and \textit{OnlyB}, the average MU cost and utilization of the optimized resource decrease. \textit{OnlyF} fully utilizes communication resources ($1.0$) while incurring a lower average MU cost and consuming less computational resources than \textit{proposed}. However, this increases the average ASP cost due to over-provisioning communication resources. Similarly, in \textit{OnlyB}, the average MU cost decreases from $126.52$ to $26.03$, but the ASP cost surges to $419.68$. This is because the computational resources are allocated equally without considering actual needs, leading to resource wastage and increased costs. By jointly optimizing both computational and communication resources, the \textit{proposed} avoids these pitfalls and achieves resource allocation on demand.

As shown in Fig.~\ref{compare_ASP}, as the number of ASPs (i.e., parallel tasks) increases, the \textit{proposed} saves approximately $73\%$ in computational and communication overhead, $75\%$ in average cost of MUs, and $84\%$ in average cost of ASPs compared to the other schemes. This is because, when new ASPs are added, MUs, aiming to maximize their benefits, reduce the rewards given to each ASP, resulting in a decrease in both the average cost of MUs and ASPs. This phenomenon reflects the \textit{proposed} scheme's adaptability and on-demand resource allocation capabilities. \textit{Ratio2} clearly outperforms \textit{Ratio1}, primarily due to the impact of the fixed total price $R_{total}$. If $R_{total}$ is set too high, it can lead to overuse of resources; conversely, if set too low, it may fail to meet the service requirements of MUs. This method limits flexibility and adaptability, causing unnecessary resource waste or ineffective service. In addition, the average cost of MUs and ASPs for \textit{OnlyF} and \textit{OnlyB} exhibit variations influenced by the total resources and cost factors, indicating that single resource optimization is insufficient to maximize benefits effectively. For \textit{Token}, the average cost of ASPs continues to increase because the resource usage ratio is always kept at $1.0$. It cannot be dynamically adjusted according to the actual load, resulting in inefficient use of resources.

In summary, existing schemes, including fixed total reward-based approaches (\textit{Ratio1}/\textit{Ratio2}), single-dimensional optimization strategies (\textit{OnlyF}/\textit{OnlyB}), and token-based rewards (\textit{Token}), exhibit limitations in balancing the interests of both ASPs and MUs. In contrast, the \textit{proposed}, leveraging QoE-driven dynamic rewards and joint resource optimization, enables efficient on-demand resource allocation. As a result, it achieves average reductions of approximately $64.9\%$ in computational and communication overhead, $66.5\%$ in the service cost for MUs, and $76.8\%$ in the resource consumption of ASPs.

\section{Conclusion}\label{VII}
In this paper, we addressed the challenge of delivering personalized AIGC services in resource-constrained edge networks. Our key contribution was the design of a novel multi-dimensional QoE metric that effectively captured the personalized demands of MUs by incorporating AIGC accuracy, token count, and service latency. We quantified model accuracy by measuring the performance gap in chain-of-thought (CoT) reasoning and validated its effectiveness using the ScienceQA benchmark.
To cope with limited ASP resources and the competition among MUs, we developed a QoE-driven incentive mechanism for personalized service provisioning. We reformulated the incentive process as an EPEC, proved the existence and uniqueness of the equilibrium, and proposed a dual-perturbation reward optimization algorithm to compute the optimal solution.
Extensive experiments demonstrated that our mechanism achieves efficient, on-demand resource allocation while significantly reducing the overall cost for both ASPs and MUs.

\appendices

\section{Proof of Convergence}

\begin{theorem*}
Suppose the following assumptions~\cite{6021434} hold:
\begin{itemize}
    \item[\textbf{(A1)}] Each utility $U_m^{mu}(\bm{R}_m)$ is strictly concave and continuously differentiable in $\bm{R}_m$,
    \item[\textbf{(A2)}] The perturbation directions are bounded, i.e., $\|\mathbf{d}_m^{(t)}\|_2^2 \leq d_{m}^{+}$, for all $ m \in \mathcal{M}$,
    \item[\textbf{(A3)}] The difference of rewards between two iterations is bounded, i.e., $\|\mathbf{R}_m^{(t)} - \mathbf{R}_m^{(s)}\|_2^2 \leq R_m^+, \ \forall t \neq s, \ t,s \in \{1, \dots, T\}$, where $R_m^+$ is a constant.
\end{itemize}
Then, under different step size settings, the convergence properties of Algorithm~1 are as follows:

\begin{enumerate}
    \item[\textbf{(i)}]
    If the step sizes $\Delta_{t}$ satisfy
    $
    \sum_{t=1}^\infty \Delta_{t} = \infty,  \sum_{t=1}^\infty \Delta_{t}^2 < \infty$, 
    then Algorithm~1 converges to an $\epsilon$-Nash equilibrium (NE), i.e.,
    \[
    \frac{1}{T_{\epsilon}} \sum_{t=1}^{T_{\epsilon}} \left[ \sum_{m=1}^M \left( U_m^{mu}(\mathbf{R}_m^{*}) - U_m^{mu}(\mathbf{R}_m^{(t)}) \right) \right] \leq \epsilon,
    \]
    within 
    $
    T_\epsilon = \mathcal{O}\left( \exp\left( \frac{\ddot{\kappa}}{\epsilon} \right) \right)$ iterations,
    where $\ddot{\kappa}=  (\frac{N}{u}\sum_{m=1}^M(R_{m}^{max})^2 +  MN\frac{u\pi^2}{6})$.

    \item[\textbf{(ii)}]
    If the step size is fixed as $\Delta_{t} = \hat{\Delta}$, then Algorithm~1 approaches a
    $\left[ \frac{\hat{\Delta} d_{1}^{+}}{2}, \frac{\hat{\Delta} d_{2}^{+}}{2}, \ldots, \frac{\hat{\Delta} d_{M}^{+}}{2} \right]$
    neighborhood of a NE at a rate of
    $
    \mathcal{O}\left( \frac{\kappa'}{\epsilon'} \right)$,
    where 
   $\epsilon' =  \epsilon - \frac{\hat{\Delta} MN}{2}, \quad \kappa' = \frac{N \sum_{m=1}^M (R_m^{\max})^2}{\hat{\Delta}}$, valid for all $\epsilon>\frac{\hat{\Delta} M N}{2}$.
\end{enumerate}
\end{theorem*}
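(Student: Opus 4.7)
The plan is to analyze Algorithm~1 as a zeroth-order projected coordinate ascent on each $U_m^{mu}$, and to set up a Lyapunov argument based on the squared distance to the Nash equilibrium $\mathbf{R}^{*}$ whose existence and uniqueness are established in Theorem~\ref{upper_the}. The decisive observation is that the two-sided perturbation test $U_m^{mu}(R_{nm}^{(t)}\pm\Delta_t,\bm{R}^{(t)}_{-n,m})$ acts as a finite-difference estimator of $\partial U_m^{mu}/\partial R_{nm}$: by the strict concavity of $U_m^{mu}$ established in the upper-level equilibrium analysis, whichever element of $\lbrace +\Delta_t,\,0,\,-\Delta_t\rbrace$ the algorithm selects coincides with the sign of the partial derivative up to an $O(\Delta_t^2)$ curvature term. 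Hence the chosen increment $\mathbf{d}_m^{(t)}\in\lbrace -1,0,+1\rbrace^N$ is an approximate ascent direction, and the update $\mathbf{R}_m^{(t+1)}=\Pi_{[R_m^{\min},R_m^{\max}]^N}\bigl(\mathbf{R}_m^{(t)}+\Delta_t\mathbf{d}_m^{(t)}\bigr)$ is a biased stochastic-approximation step whose bias vanishes with $\Delta_t$.

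Next I would introduce the Lyapunov function $V^{(t)}=\sum_{m=1}^M\|\mathbf{R}_m^{(t)}-\mathbf{R}_m^{*}\|_2^2$. Using non-expansiveness of the projection together with (A2)--(A3), a standard one-step expansion gives $V^{(t+1)}\le V^{(t)}-2\Delta_t\sum_m\langle \mathbf{R}_m^{(t)}-\mathbf{R}_m^{*},\nabla U_m^{mu}(\mathbf{R}_m^{(t)})\rangle+\Delta_t^2\sum_m d_m^{+}$. By the concavity of $U_m^{mu}$ along $\mathbf{R}_m$ (Theorem~\ref{upper_the}), the first-order inequality $\langle \mathbf{R}_m^{(t)}-\mathbf{R}_m^{*},\nabla U_m^{mu}(\mathbf{R}_m^{(t)})\rangle\ge U_m^{mu}(\mathbf{R}_m^{*})-U_m^{mu}(\mathbf{R}_m^{(t)})$ holds, converting the inner product into a utility-gap. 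Telescoping and bounding $V^{(1)}\le N\sum_m(R_m^{\max})^2$ yields $2\sum_{t=1}^{T}\Delta_t\sum_m\bigl[U_m^{mu}(\mathbf{R}_m^{*})-U_m^{mu}(\mathbf{R}_m^{(t)})\bigr]\le V^{(1)}+MN\sum_{t=1}^{T}\Delta_t^2$.

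For part~(i) I would substitute $\Delta_t=u/t$, apply $\sum_{t=1}^\infty 1/t^2=\pi^2/6$ to bound the error sum, and invoke the standard averaging trick with weights $\Delta_t$: since $\sum_{t=1}^T u/t\asymp u\ln T$, this produces an inequality of the form $(1/T)\sum_t\mathrm{gap}\le \ddot{\kappa}/\ln T$ with precisely the constant $\ddot{\kappa}=(N/u)\sum_m(R_m^{\max})^2+MN u\pi^2/6$. Setting the right-hand side equal to $\epsilon$ yields $T_\epsilon=\mathcal{O}(\exp(\ddot{\kappa}/\epsilon))$. For part~(ii), taking $\Delta_t\equiv\hat{\Delta}$ reduces the recursion to $(1/T)\sum_t\mathrm{gap}\le V^{(1)}/(2\hat{\Delta}T)+\hat{\Delta}MN/2$; the constant residual $\hat{\Delta}MN/2$ identifies the neighborhood in which the iterates are trapped (componentwise $\hat{\Delta}d_m^{+}/2$), and solving $V^{(1)}/(2\hat{\Delta}T)\le \epsilon-\hat{\Delta}MN/2=\epsilon'$ yields the polynomial rate $\mathcal{O}(\kappa'/\epsilon')$ valid for $\epsilon>\hat{\Delta}MN/2$.

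The main obstacle I anticipate is controlling the bias of the zeroth-order finite-difference update: the ternary increment in $\lbrace -\Delta_t,0,+\Delta_t\rbrace$ is not exactly proportional to the gradient, and Taylor expansion contributes a second-order curvature term of magnitude $O(\Delta_t^2)\,\|\nabla^2 U_m^{mu}\|$ that must be absorbed into the error budget $\sum_t\Delta_t^2$. Verifying that the Hessian norms of $U_m^{mu}$ are uniformly bounded over the compact box $[R_m^{\min},R_m^{\max}]^N$---which follows directly from the closed-form Hessian entries already derived in the proof of Theorem~\ref{upper_the}---closes this gap and ensures that assumptions (A1)--(A3) deliver the stated convergence rates.
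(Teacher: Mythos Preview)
Your proposal is correct and mirrors the paper's own argument: both identify the dual-perturbation test as a sign-of-gradient estimator, use the concavity (subgradient) inequality to convert the cross term in the squared-distance recursion $\|\mathbf{R}^{(t+1)}-\mathbf{R}^{*}\|_2^2$ into the utility gap $D_t$, and then telescope/average to extract the two rates with exactly the constants $\ddot{\kappa}$ and $\kappa'$ you obtain. Watch a pair of cancelling sign slips in your one-step expansion and concavity line (the cross term carries $+2\Delta_t\langle\mathbf{R}_m^{(t)}-\mathbf{R}_m^{*},\mathbf{d}_m^{(t)}\rangle$, and concavity gives $\langle\mathbf{R}_m^{(t)}-\mathbf{R}_m^{*},\nabla U_m^{mu}\rangle\le -D_m^{(t)}$); otherwise the only cosmetic difference is that the paper averages the telescoped bound over $t$ (producing a double sum and a running-minimum $\tfrac{1}{T}\sum_t\min_{l\le t}D_l$) where you do a single telescope plus weighted average, and the paper elides the $O(\Delta_t^2)$ finite-difference bias that you rightly flag and bound via the Hessian on the compact box.
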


\vspace{5pt}
\begin{proof}
We prove convergence by analyzing the total utility gap and the distance to the NE point $\mathbf{R}^* = (\mathbf{R}_1^*, \dots, \mathbf{R}_M^*)$, where $\mathbf{R}_m^*$ uniquely maximizes $U_m^{mu}(\mathbf{R}_m, \mathbf{R}_{-m}^*)$ for each MU $m \in \mathcal{M}$. The proof proceeds in four steps.

\vspace{8pt}

\noindent \textbf{Step 1: Direction Selection via Two-Sided Finite Difference}

For each $R_{nm}^{(t)}$, the direction $\mathbf{d}_{nm}^{(t)}$ is designed to approximate the gradient sign using a two-sided finite difference. Given the utility function $U_m^{mu}(\mathbf{R}_m)$ is continuously differentiable (A1), we evaluate the utility at perturbed rewards $R_{nm}^{(t)} \pm \Delta_{t}$:
\begin{equation}
\begin{aligned}
    U_m^{mu}(R_{nm}^{(t)} + \Delta_{t}, \mathbf{R}_{-n,m}^{(t)}) \approx &\ U_m^{mu}(R_{nm}^{(t)}, \mathbf{R}_{-n,m}^{(t)}) + \Delta_{t} \frac{\partial U_m^{mu}}{\partial R_{nm}^{(t)}} \\&+ \frac{\Delta_t^2}{2} \frac{\partial^2 U_m^{mu}}{\partial (R_{nm}^{(t)})^2} + O(\Delta_t^3),
\end{aligned}
\end{equation}
\begin{equation}
\begin{aligned}
     U_m^{mu}(R_{nm}^{(t)} - \Delta_t, \mathbf{R}_{-n,m}^{(t)}) \approx &\ U_m^{mu}(R_{nm}^{(t)}, \mathbf{R}_{-n,m}^{(t)}) - \Delta_t \frac{\partial U_m^{mu}}{\partial R_{nm}^{(t)}} \\&+ \frac{\Delta_t^2}{2} \frac{\partial^2 U_m^{mu}}{\partial (R_{nm}^{(t)})^2} + O(\Delta_t^3).
\end{aligned}
\end{equation}
Subtracting the above equations gives the utility difference:
\begin{equation}
\begin{aligned}
     U_m^{mu}(R_{nm}^{(t)} + \Delta_t, \mathbf{R}_{-n,m}^{(t)}) - U_m^{mu}(R_{nm}^{(t)} - \Delta_t, \mathbf{R}_{-n,m}^{(t)}) \\ \approx 2\Delta_t \frac{\partial U_m^{mu}}{\partial R_{nm}^{(t)}} + O(\Delta_t^3).
\end{aligned}
\end{equation}
Thus, the gradient estimate becomes:
\begin{equation}
\begin{aligned}
    \hat{g}_{nm}^{(t)} &= \frac{U_m^{mu}(R_{nm}^{(t)} + \Delta_t, \mathbf{R}_{-n,m}^{(t)}) - U_m^{mu}(R_{nm}^{(t)} - \Delta_t, \mathbf{R}_{-n,m}^{(t)})}{2\Delta_t} \\ &\approx \frac{\partial U_m^{mu}}{\partial R_{nm}^{(t)}} + O(\Delta_t^2).
\end{aligned}
\end{equation}
The corresponding direction is then selected as:
\begin{equation}
    \mathbf{d}_{nm}^{(t)} = \text{sign}(\hat{g}_{nm}^{(t)}) =
\begin{cases} 
1 & \text{if } \hat{g}_{nm}^{(t)} > 0, \\
-1 & \text{if } \hat{g}_{nm}^{(t)} < 0, \\
0 & \text{if } \hat{g}_{nm}^{(t)} \approx 0.
\end{cases}
\end{equation}
Since \(\hat{g}_{nm}^{(t)} \approx \frac{\partial U_m^{mu}}{\partial R_{nm}^{(t)}}\), this yields:
\begin{equation}
    d_{nm}^{(t)} \approx \text{sign}\left( \frac{\partial U_m^{mu}}{\partial R_{nm}^{(t)}} \right), \mathbf{d}_{m}^{(t)} \approx \text{sign}(\nabla U_m^{mu}(\mathbf{R}_m^{(t)})),
\end{equation}
with an approximation error of $O(\Delta_t^2)$. This two-sided finite difference approach provides a more accurate gradient sign estimate than single-sided comparisons, enhancing robustness and aligning with zeroth-order optimization techniques.

\vspace{8pt}

\noindent \textbf{Step 2: Monotonic Utility Improvement and Utility Gap}  

Since $\mathbf{d}_{m}^{(t)} \approx \text{sign}(\nabla U_m^{mu}(\mathbf{R}_m^{(t)}))$, we use $\mathbf{d}_{m}^{(t)}$ as the ascent direction for $U_m^{mu}$ in the following analysis. The update:
\begin{equation}
    \mathbf{R}_m^{(t+1)} = \mathbf{R}_m^{(t)} + \Delta_t \mathbf{d}_{m}^{(t)},
\end{equation}
implements a sign-based gradient ascent on $U_m^{mu}$. Since $U_m^{mu}$ is strictly concave (A1), if $\mathbf{d}_{nm}^{(t)} \neq 0$, then $U_m^{mu}(\mathbf{R}_m^{(t+1)}) > U_m^{mu}(\mathbf{R}_m^{(t)})$. If $\mathbf{d}_{nm}^{(t)} = 0$, the utility remains unchanged. Thus, the sequence $\{U_m^{mu}(\mathbf{R}_m^{(t)})\}_{t=1}^\infty$ is non-decreasing for each $m$.
The algorithm aims to minimize the utility gap:
\begin{equation}
    D_t = \sum_{m=1}^M \left( U_m^{mu}(\mathbf{R}_m^*) - U_m^{mu}(\mathbf{R}_m^{(t)}) \right) \geq 0,
\end{equation}
where \(\mathbf{R}_m^*\) is the reward vector maximizing \( U_m^{mu}(\mathbf{R}_m) \). For concave functions (A1), the subgradient inequality holds~\cite{BoydVandenberghe2004}:
\begin{equation}
U_m^{mu}(\mathbf{R}_m^*) \leq U_m^{mu}(\mathbf{R}_m^{(t)}) + \nabla U_m^{mu}(\mathbf{R}_m^{(t)})^\top (\mathbf{R}_m^* - \mathbf{R}_m^{(t)}).
\end{equation}
Summing over all MUs:
\begin{equation} 
D_t \leq \sum_{m=1}^M \nabla U_m^{mu}(\mathbf{R}_m^{(t)})^\top (\mathbf{R}_m^* - \mathbf{R}_m^{(t)}).
\end{equation}
Using \(\mathbf{d}_{m}^{(t)} \approx \text{sign}(\nabla U_m^{mu}(\mathbf{R}_m^{(t)}))\), we approximate:
\begin{equation}
    \sum_{m=1}^M \nabla U_m^{mu}(\mathbf{R}_m^{(t)})^\top (\mathbf{R}_m^* - \mathbf{R}_m^{(t)}) \approx (\mathbf{d}^{(t)})^\top (\mathbf{R}^* - \mathbf{R}^{(t)}),
\end{equation}where $\mathbf{d}^{(t)} = (\mathbf{d}_1^{(t)}, \dots, \mathbf{d}_M^{(t)})$ and $\mathbf{R}^{(t)} = (\mathbf{R}_1^{(t)}, \dots, \mathbf{R}_M^{(t)})$. 
Therefore, the utility gap is upper bounded by:
\begin{equation}
    D_t \leq (\mathbf{d}^{(t)})^\top (\mathbf{R}^* - \mathbf{R}^{(t)}) \implies (\mathbf{d}^{(t)})^\top (\mathbf{R}^{(t)} - \mathbf{R}^*) \leq -D_t.
\end{equation}

\vspace{8pt}

\noindent \textbf{Step 3: Distance to Optimal Point}  

The feasible set is compact, with \(\|\mathbf{R}_m^{(t)} - \mathbf{R}_m^{(s)}\|_2^2 \leq R_m^+\), where \( R_m^+ = N (R_m^{\max})^2 \)(A3). Thus, the total squared distance across all MUs is upper bounded:
\begin{equation}
    \|\mathbf{R}^{(t)} - \mathbf{R}^{(s)}\|_2^2 \leq R^{+}, \ R^{+} = \sum_{m=1}^M R_m^+ = N \sum_{m=1}^M (R_m^{\max})^2.
\end{equation}
Next, consider the squared Euclidean distance between the current reward profile and the optimal $\mathbf{R}^*$:
\begin{equation}
    \begin{aligned}
        &\|\mathbf{R}^{(t+1)} - \mathbf{R}^*\|_2^2 = \|\mathbf{R}^{(t)} + \Delta_t \mathbf{d}^{(t)} - \mathbf{R}^*\|_2^2 \\&= \|\mathbf{R}^{(t)} - \mathbf{R}^*\|_2^2 + 2 \Delta_t (\mathbf{d}^{(t)})^\top  (\mathbf{R}^{(t)} - \mathbf{R}^*) + \Delta_t^2 \|\mathbf{d}^{(t)}\|_2^2.
    \end{aligned}
\end{equation}
Using the inequality $(\mathbf{d}^{(t)})^\top  (\mathbf{R}^{(t)} - \mathbf{R}^*) \leq -D_t$ from \textbf{Step 2}, we obtain:
\begin{equation}
    \|\mathbf{R}^{(t+1)} - \mathbf{R}^*\|_2^2 \leq \|\mathbf{R}^{(t)} - \mathbf{R}^*\|_2^2 - 2 \Delta_t D_t + \Delta_t^2 \|\mathbf{d}^{(t)}\|_2^2.
\end{equation}

\vspace{8pt}

\noindent \textbf{Step 4: Convergence Rate}  

By averaging the inequality in \textbf{Step 3} over iterations $t = 1, \dots, T $:
\begin{equation}
\begin{aligned}
    \frac{1}{T} \sum_{t=1}^T \|\mathbf{R}^{(t+1)} - \mathbf{R}^*\|_2^2 
\leq &\ 
\frac{1}{T} \sum_{t=1}^T \|\mathbf{R}^{(t)} - \mathbf{R}^*\|_2^2 
\\&- \frac{2}{T} \sum_{t=1}^T \Delta_t D_t 
+ \frac{1}{T} \sum_{t=1}^T \Delta_t^2 \|\mathbf{d}^{(t)}\|_2^2.
\end{aligned}
\end{equation}
This can be rewritten recursively as:
\begin{equation}\label{rewritten}
\begin{aligned}
    \frac{1}{T} \sum_{t=1}^T \| \mathbf{R}^{(t+1)} - \mathbf{R}^* \|_2^2 
\leq &\ 
\| \mathbf{R}^{(1)} - \mathbf{R}^* \|_2^2 
- \frac{2}{T} \sum_{t=1}^T \sum_{l=1}^t \Delta_l D_l 
\\&+ \frac{1}{T} \sum_{t=1}^T \sum_{l=1}^t \Delta_l^2 \| \mathbf{d}^{(l)} \|_2^2.
\end{aligned}
\end{equation}
Since $\sum_{l=1}^{t} \frac{1}{l} \geq \frac{1}{T} \sum_{t=1}^{T} \sum_{l=1}^{t} \frac{1}{l}$,
the
second term in the right-hand-side of (\ref{rewritten}) can be expressed as:
\begin{equation}\label{term}
    \begin{aligned}
        \frac{2}{T} \sum_{t=1}^{T}\sum_{l=1}^{t} \Delta_{l}D_{l} &\geq \frac{2}{T}\sum_{t=1}^{T}\left ( \sum_{l=1}^{t} \Delta_{l} \right )\min_{l\in [1,t]} D_{l} \\
&\geq \left (\frac{2}{T}\sum_{t=1}^{T} \sum_{l=1}^{t} \Delta_{l} \right )\frac{1}{T}\sum_{t=1}^{T}\min_{l\in [1,t]} D_{l},
    \end{aligned}
\end{equation}
Substituting (\ref{term}) into (\ref{rewritten}), we derive the following bound:
\begin{equation}
\begin{aligned}
    \frac{1}{T} \sum_{t=1}^T \min_{l \in [1,t]} D_l 
\leq &\ 
\frac{ \| \mathbf{R}^{(1)} - \mathbf{R}^* \|_2^2 
+ \frac{1}{T} \sum_{t=1}^T \sum_{l=1}^t \Delta_l^2 \| \mathbf{d}^{(l)} \|_2^2}
{\frac{2}{T} \sum_{t=1}^T \sum_{l=1}^t \Delta_l }\\&-\frac{\frac{1}{T} \sum_{t=1}^T \| \mathbf{R}^{(t+1)} - \mathbf{R}^* \|_2^2 }{\frac{2}{T} \sum_{t=1}^T \sum_{l=1}^t \Delta_l}.
\end{aligned}
\end{equation}
Since $\frac{1}{T} \sum_{t=1}^T \| \mathbf{R}^{(t+1)} - \mathbf{R}^* \|_2^2 \geq 0$ and $\|\mathbf{R}^{(1)} - \mathbf{R}^*\|_2^2 \leq R^{+}$, it follows that:
\begin{equation}\label{20}
    \frac{1}{T} \sum_{t=1}^T \min_{l \in [1,t]} D_l 
\leq 
\frac{ R^{+} + \frac{1}{T} \sum_{t=1}^T \sum_{l=1}^t \Delta_l^2 \| \mathbf{d}^{(l)} \|_2^2 }
{\frac{2}{T} \sum_{t=1}^T \sum_{l=1}^t \Delta_l }.
\end{equation}
Substituting the step size $\Delta_l = \frac{u}{l}$ into (\ref{20}) , and using the facts that $\|\mathbf{d}^{(t)}\|_2^2 \leq M N$ (A2), $\Delta_l^2 = \frac{u^2}{l^2}$, and the inequality $\sum_{l=1}^t\frac{1}{l^2}<\sum_{l=1}^{\infty}\frac{1}{l^2}=\frac{\pi^2}{6}$, we obtain:
\begin{equation}
    \frac{1}{T} \sum_{t=1}^T \sum_{l=1}^t \Delta_l^2 \|\mathbf{d}^{(l)}\|_2^2 \leq \frac{u^2 M N}{T} \sum_{t=1}^T \sum_{l=1}^t \frac{1}{l^2} \leq M N u^2 \frac{\pi^2}{6}.
\end{equation}
Moreover, leveraging the properties of the harmonic number~\cite{BoydVandenberghe2004}: $\sum_{l=1}^t \frac{1}{l} = \log t + \frac{1}{2t} - \frac{1}{12 t^2} + O(t^{-4}) \approx \log t + \frac{1}{2t}+\dot{\kappa}$, where 
$\dot{\kappa}$ denotes a small residual constant. Based on this approximation, the following result can be derived:
\begin{equation}\label{22}
\begin{aligned}
    \frac{1}{T} \sum_{t=1}^T \min_{l \in [1,t]} D_l 
&\leq 
\frac{ \ddot{\kappa}}
{\frac{2}{T} \sum_{t=1}^T (\log t + \frac{1}{2t}+\dot{\kappa}) }\\& \leq 
\frac{\ddot{\kappa}}{2 \log T + 2\dot{\kappa}} \approx \frac{\ddot{\kappa}}{2 \log T},
\end{aligned}
\end{equation}where $\ddot{\kappa}=  (\frac{N}{u}\sum_{m=1}^M(R_{m}^{max})^2 +  MN\frac{u\pi^2}{6})$.

Since $\sum_{t=1}^T \log t = \Theta(T \log T)$ and hence $\lim\limits_{T \to \infty}\frac{1}{T} \sum_{t=1}^T \log t \to \infty$, which implies that Algorithm 1 converges to a stationary point. Assume that for $T_{\epsilon}\geq T$ and $T_{\epsilon}$ is
a very large number, we try to achieve the accuracy of $\frac{1}{T_\epsilon} \sum_{t=1}^{T_\epsilon} \min_{l \in [1,t]} D_l = \epsilon$. From (\ref{22}) and concavity of logarithm function, we can obtain:
\begin{equation}
    \epsilon \leq \frac{\ddot{\kappa}}{2 \log T_\epsilon} \implies \log T_\epsilon \leq \frac{\ddot{\kappa}}{\epsilon} \implies T_\epsilon = \mathcal{O}\left(\exp\left(\frac{\ddot{\kappa}}{\epsilon}\right)\right).
\end{equation}
Therefore, Algorithm 1 guarantees convergence to an $\epsilon$-NE at a speed
of $\mathcal{O}\left(\exp\left(\frac{\ddot{\kappa}}{\epsilon}\right)\right)$.

\vspace{16pt}
To reduce the convergence complexity of Algorithm~1, one practical strategy is to relax the required solution accuracy. Specifically, the variable $R_{nm}$ can only be selected from a finite discrete set, i.e.,
$
R_{nm} \in \left\{ R_{m}^{min}, \frac{R_{m}^{max}-R_{m}^{min}}{L}+R_{m}^{min}, \frac{2(R_{m}^{max}-R_{m}^{min})}{L}+R_{m}^{min}, \ldots, R_{m}^{\max} \right\}$,
where $L$ is a fixed quantization level.
This discretization can be effectively modeled by using a constant step size $\hat{\Delta}=\frac{R_{m}^{max}-R_{m}^{min}}{L}$ in the update rule, which transforms the action space into a finite set. 

Substituting $\Delta_l = \hat{\Delta}$ into (\ref{20}), and using the bounded direction $\|\mathbf{d}^{(l)}\|_2^2 \leq M N$ (A2), the numerator of (\ref{20}) is bounded as:
\begin{equation}
    \frac{1}{T} \sum_{t=1}^T \sum_{l=1}^t \Delta^{2}_{l} \|\mathbf{d}^{(l)}\|_2^2 
    \leq \hat{\Delta}^2 MN\frac{1}{T}\sum_{t=1}^T t = \frac{\hat{\Delta}^2 M N(T+1)}{2}.
\end{equation}
Similarly, the denominator becomes:
\begin{equation}
    \frac{2}{T} \sum_{t=1}^T \sum_{l=1}^t \Delta_l = 2 \hat{\Delta} \frac{T(T+1)}{2T} = \hat{\Delta}(T+1).
\end{equation}
Therefore, (\ref{20}) yields:
\begin{equation}
\frac{1}{T} \sum_{t=1}^T \min_{l \in [1,t]} D_l 
\leq 
\frac{R^{+}}{\hat{\Delta}(T+1)} + \frac{\hat{\Delta} M N}{2}.
\end{equation}
Assuming that $T_{\epsilon}\geq T$ and $T_{\epsilon}$ is
a very large number, we try to achieve the accuracy of $\frac{1}{T_\epsilon} \sum_{t=1}^{T_\epsilon} \min_{l \in [1,t]} D_l = \epsilon$. 
Then, we can derive the following inequality:
\begin{equation}
   \epsilon - \frac{\hat{\Delta} M N}{2} \leq \frac{R^{+}}{\hat{\Delta}(T_\epsilon+1)}.
\end{equation}
Let $\epsilon' = \epsilon - \frac{\hat{\Delta} M N}{2}$, valid when $\epsilon>\frac{\hat{\Delta} M N}{2}$, we obtain:
\begin{equation}
\epsilon' \leq \frac{R^{+}}{\hat{\Delta}(T_\epsilon+1)} \Rightarrow T_\epsilon \leq \frac{R^{+}}{\hat{\Delta} \epsilon'} - 1 \implies T_\epsilon = \mathcal{O}\left(\frac{{\kappa}'}{\epsilon'}\right),
\end{equation} where $\kappa'= \frac{ N \sum_{m=1}^M (R_m^{\max})^2}{\hat{\Delta}}$, for all $\epsilon>\frac{\hat{\Delta} M N}{2}$.

 In this case, if the initial point
is not appropriately chosen, Algorithm 1 will be unstable for the first few iterations, i.e., fluctuating between different
neighborhood of NEs. However, when the number of iterations is large, the Algorithm~1 approaches a $[\frac{\hat{\Delta}d_{1}^{+}}{2}, \frac{\hat{\Delta}d_{2}^{+}}{2}, \ldots, \frac{\hat{\Delta}d_{M}^{+}}{2}]$
neighborhood of a NE at a rate of $\mathcal{O}(\frac{{\kappa}'}{\epsilon'})$.

\end{proof}
\bibliographystyle{IEEEtran}

\bibliography{AIGC}

\begin{IEEEbiography}[{\includegraphics[width=1in,height=1.25in,clip]{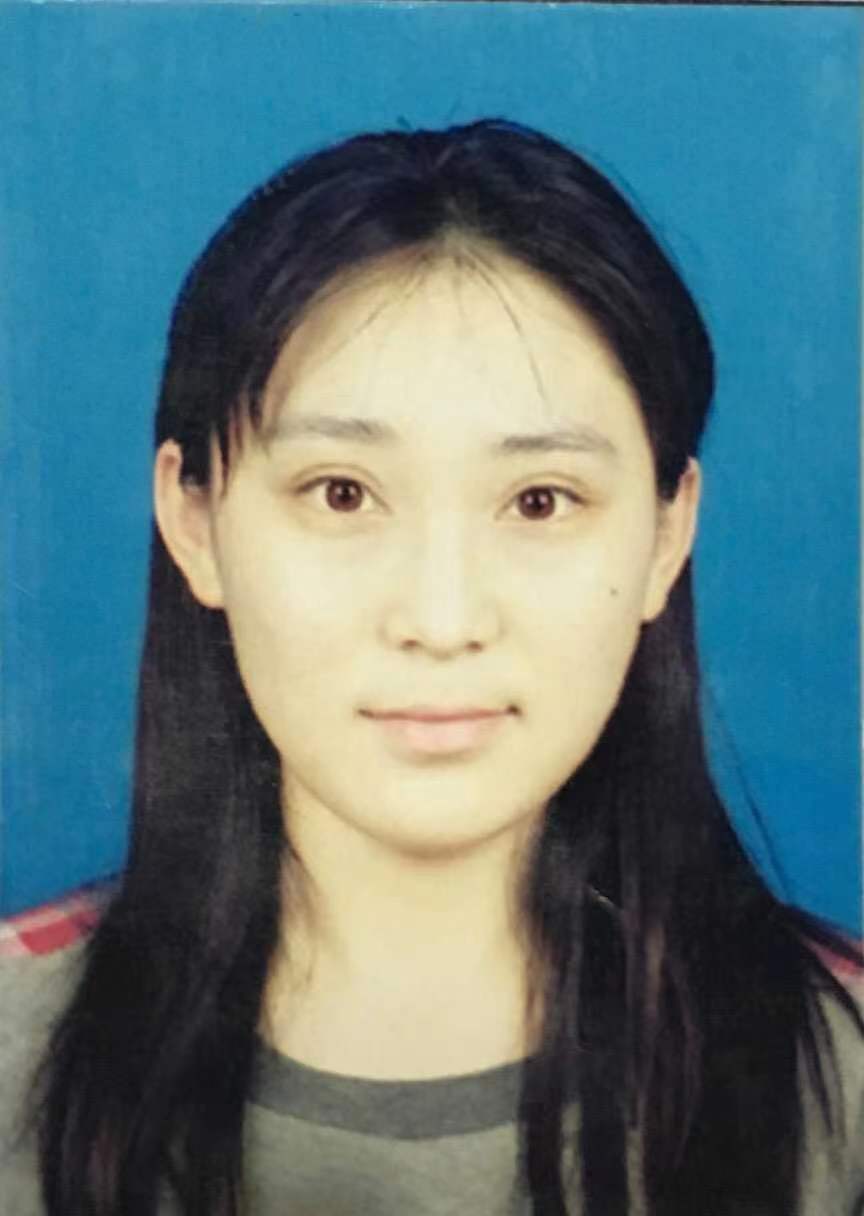}}]{Hongjia Wu} received the Ph.D. degree in computer science from the National University of Defense Technology, China, in 2023. She was a visiting student at the Singapore University of Technology and Design. She is currently a Post-doctoral Fellow at the Education University of Hong Kong in China. Her research interests mainly focus on computational offloading, game theory, dispersed computing and internet of things. 
 \end{IEEEbiography}
 
		\begin{IEEEbiography}[{\includegraphics[width=1in,height=1.25in,clip]{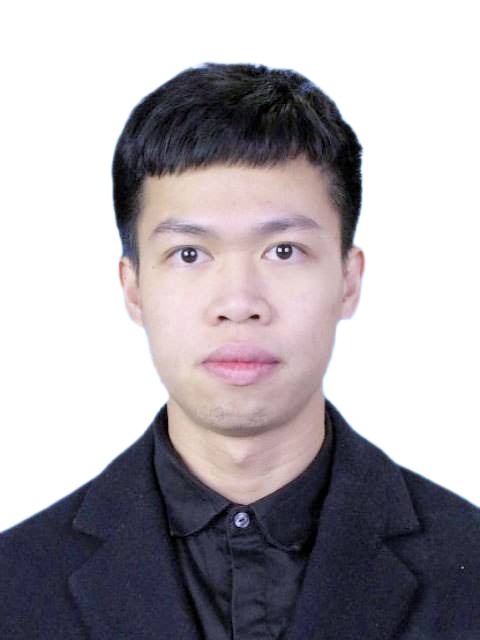}}]{Minrui Xu} (Graduated Student Member, IEEE)
  received the B.S. degree from Sun Yat-Sen University, Guangzhou, China, in 2021. He is currently working toward the Ph.D. degree in the School of Computer Science and Engineering, Nanyang Technological University, Singapore. His research interests mainly focus on Metaverse, deep reinforcement learning, and mechanism design.
 \end{IEEEbiography}

	\begin{IEEEbiography}[{\includegraphics[width=1in,height=1.25in,clip]{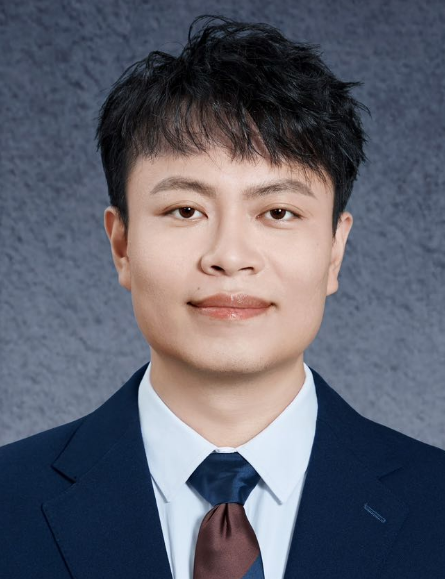}}]{Zehui Xiong} is currently a Full Professor with the School of Electronics, Electrical Engineering and Computer Science, Queen's University Belfast, United Kingdom. Prior to that, he was with Singapore University of Technology and Design, and Alibaba-NTU Singapore Joint Research Institute. He received his Ph.D. degree from Nanyang Technological University and was a visiting scholar with Princeton University and University of Waterloo. Recognized as a Clarivate Highly Cited Researcher, he has published over 250 peer-reviewed research papers in leading journals, with numerous Best Paper Awards from international flagship conferences. His research interests include 5G/B5G/6G and cellular systems, autonomous vehicles and intelligent transportation, cooperative communication and green communication, land transportation, vehicular communications and connected vehicles, wireless networks.
 \end{IEEEbiography}

	\begin{IEEEbiography}[{\includegraphics[width=1in,height=1.25in,clip]{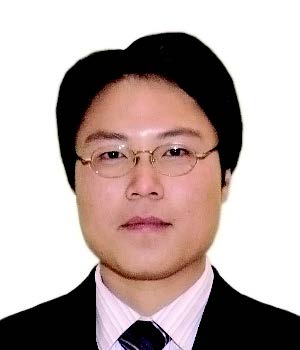}}]{Lin Gao} is a Professor
in the School of Electronic and Information Engineering, Harbin Institute of Technology (HIT)
Shenzhen Graduate School. He received the
M.S. and Ph.D. degrees in Electronic Engineering from Shanghai Jiao Tong University (China)
in 2006 and 2010, respectively. He worked as
a Postdoc Research Fellow at the Chinese University of Hong Kong from 2010 to 2015. His
research interests are in the area of network
economics and games, with applications in wireless communications, networks, and internet of things.
 \end{IEEEbiography}

\begin{IEEEbiography}[{\includegraphics[width=1in,height=1.25in,clip,keepaspectratio]{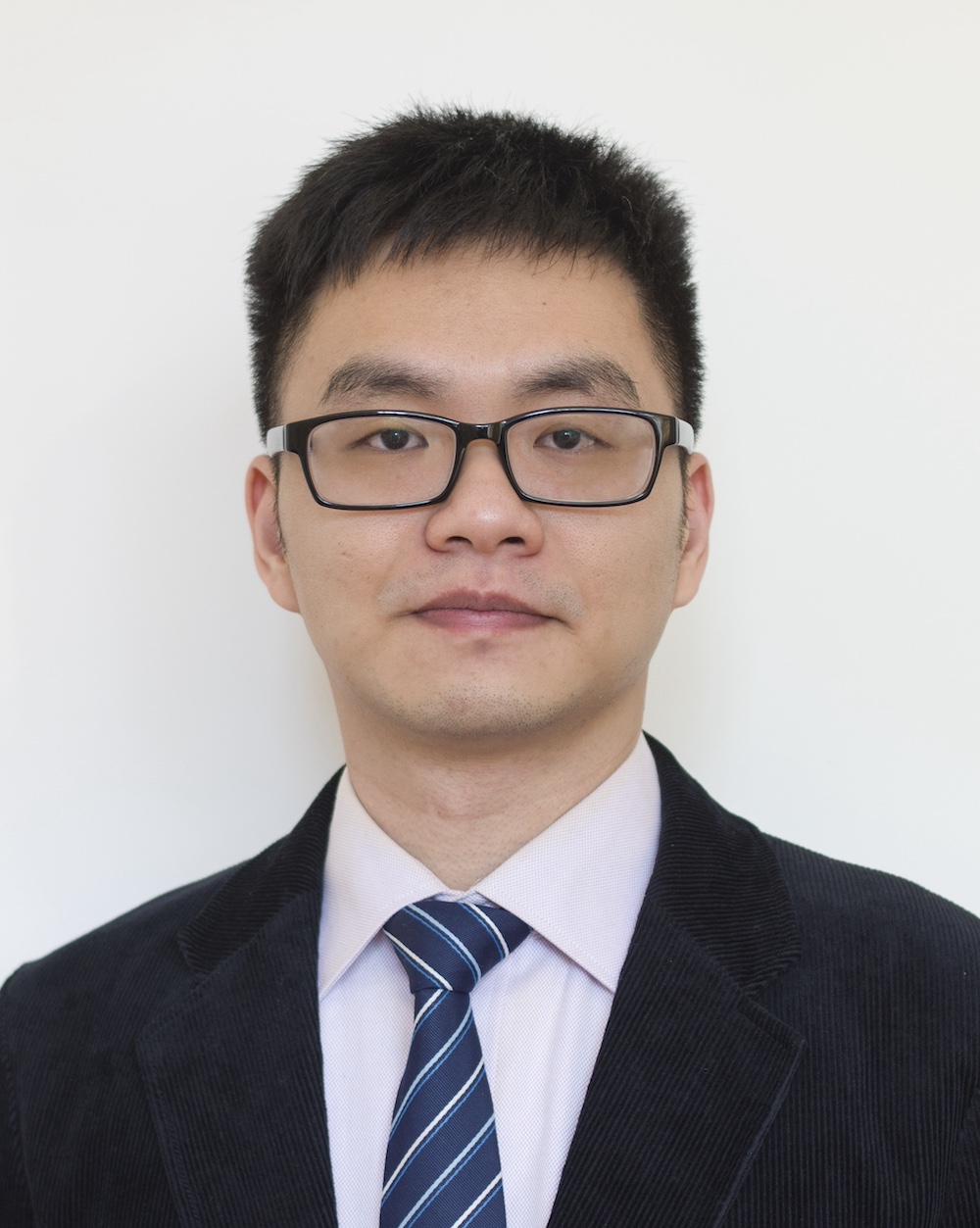}}]{Haoyuan Pan}(Member, IEEE) received the B.E. and Ph.D. degrees in Information Engineering from The Chinese University of Hong Kong (CUHK), Hong Kong, in 2014 and 2018, respectively. 

He was a Post-Doctoral Fellow with the Department of Information Engineering, CUHK, from 2018 to 2020. He is currently an assistant professor with the College of Computer Science and Software Engineering, Shenzhen University, Shenzhen, China. His research interests include wireless communications and networks, Internet of Things (IoT), semantic communications, and age of information (AoI).
\end{IEEEbiography}

\begin{IEEEbiography}[{\includegraphics[width=1in,height=1.25in,clip,keepaspectratio]{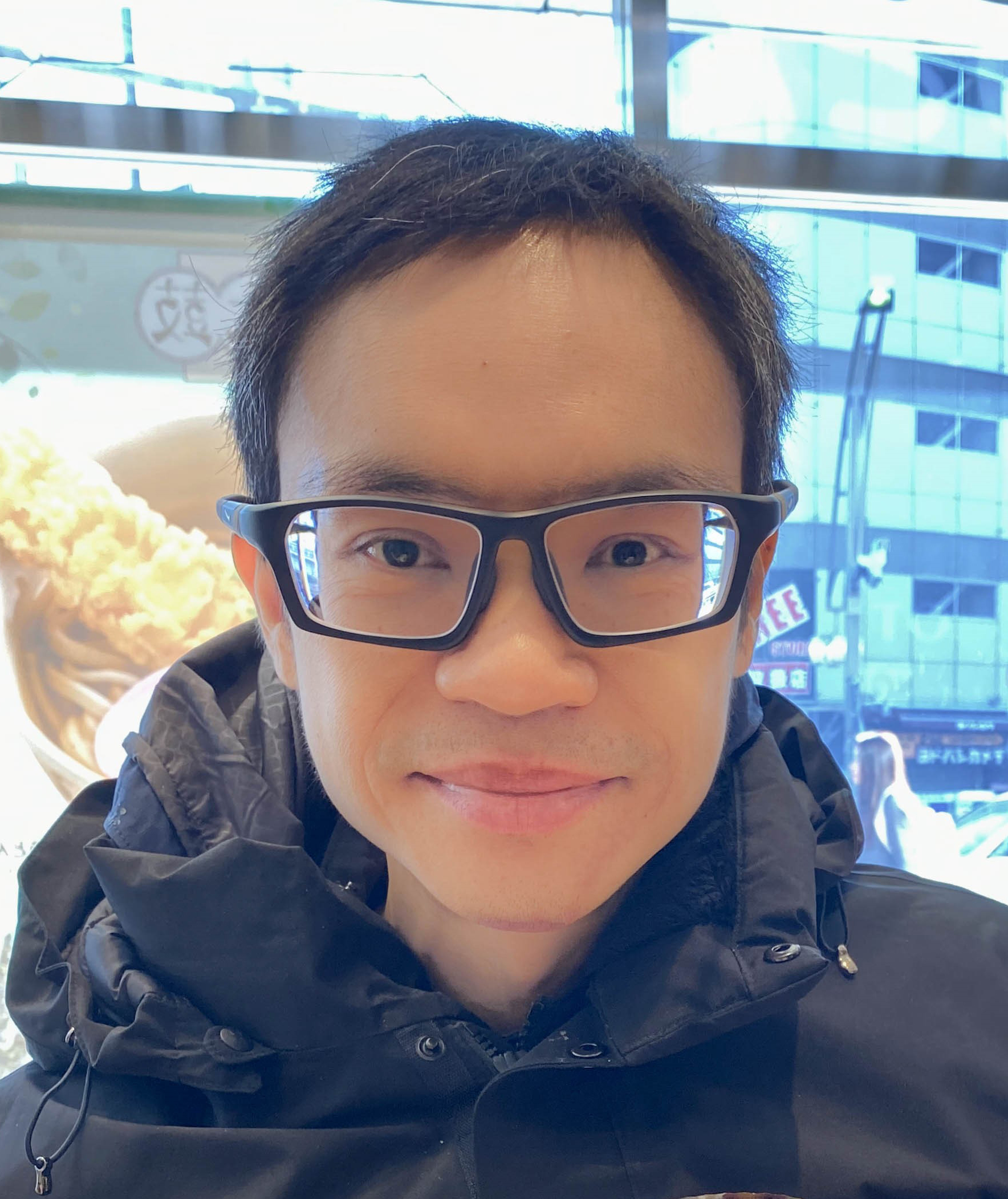}}]{Dusit Niyato}(Fellow, IEEE) is a professor in the School of Computer Science and Engineering, at Nanyang Technological University, Singapore. He received B.Eng. from King Mongkuts Institute of Technology Ladkrabang (KMITL), Thailand in 1999 and Ph.D. in Electrical and Computer Engineering from the University of Manitoba, Canada in 2008. His research interests are in the areas of sustainability, edge intelligence, decentralized machine learning, and incentive mechanism design.
\end{IEEEbiography}

\begin{IEEEbiography}[{\includegraphics[width=1in,height=1.25in,clip,keepaspectratio]{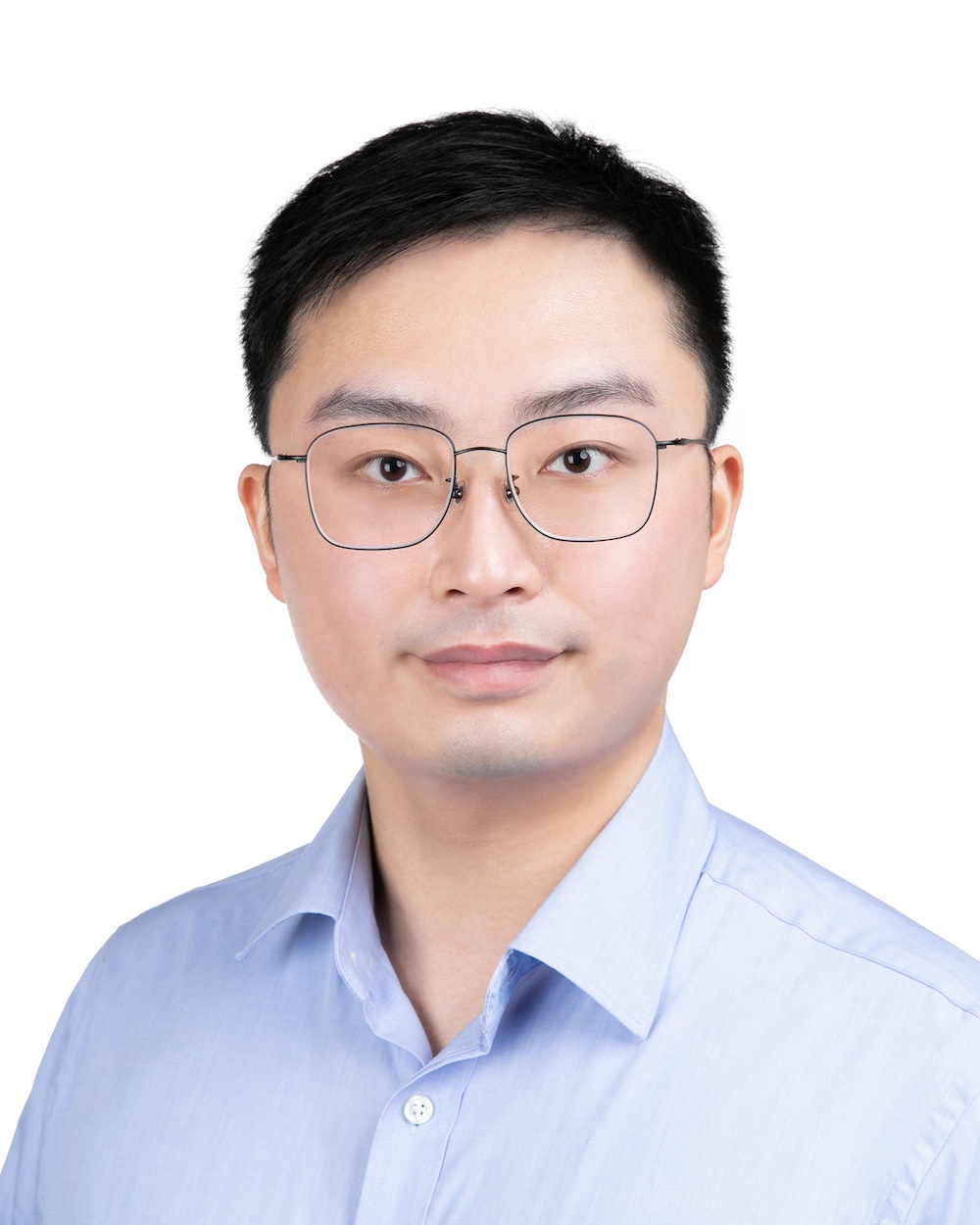}}]{Tse-Tin Chan}(Member, IEEE) received the B.Eng. (First Class Hons.) and Ph.D. degrees in Information Engineering from The Chinese University of Hong Kong (CUHK), Hong Kong SAR, China, in 2014 and 2020, respectively. 

He is currently an Assistant Professor with the Department of Mathematics and Information Technology, The Education University of Hong Kong (EdUHK), Hong Kong SAR, China. From 2020 to 2022, he was an Assistant Professor with the Department of Computer Science, The Hang Seng University of Hong Kong (HSUHK), Hong Kong SAR, China. His research interests include wireless communications and networking, Internet of Things (IoT), age of information (AoI), and artificial intelligence (AI)-native wireless communications.
\end{IEEEbiography}

\end{document}